\newtheorem{definition}{Definition}
\newtheorem{theorem}{Theorem}
\newtheorem{lemma}[theorem]{Lemma}
\begin{document}

\newcommand{\be}{\begin{equation}}
\newcommand{\ee}{\end{equation}}
\newcommand{\bea}{\begin{eqnarray}}
\newcommand{\eea}{\end{eqnarray}}
\newcommand{\beaa}{\begin{eqnarray*}}
\newcommand{\eeaa}{\end{eqnarray*}}

\title{Cascade and Triangular Source Coding with Side Information at the First Two Nodes}
\author{Haim Permuter and Tsachy Weissman \\
\thanks{
Author's emails: haimp@bgu.ac.il, tsachy@stanford.edu
}%
}%
%

\maketitle \vspace{-1.4cm}

\begin{abstract}%
We consider the cascade and triangular rate-distortion problem where
side information is known to the source encoder and to the first
user but not  to the second user. We characterize the
rate-distortion region for these problems. For the quadratic
Gaussian case, we show that it is sufficient to consider  jointly
Gaussian distributions, a fact that leads to an explicit solution.
\end{abstract}
\begin{keywords}
Cascade source coding, empirical coordination, quadratic Gaussian, Pareto frontier,
 source coding, side information, rate distortion, triangular source coding

\end{keywords}

\vspace{-0.0cm}
\section{Introduction}
Yamamoto \cite{Yamamoto_source_coding81} considered the cascade
source coding problem, where a source sends a message to User 1, and
then User 1 sends a message to User 2. In this paper, we extend
Yamamoto's cascade source coding problem to the case where side
information is known to the source and to User 1, but not
to User 2. The problem is depicted in Fig. \ref{f_cascade}.

 \begin{figure}[h!]{
\psfrag{b1}[][][1]{$X$} \psfrag{box1}[][][1]{Encoder}
\psfrag{box3}[][][1]{User 2}
 \psfrag{a2}[][][1]{$R_1$}
 \psfrag{t1}[][][1]{$$}

\psfrag{A1}[][][1]{$$} \psfrag{A2}[][][1]{$$}

 \psfrag{box2}[][][1]{User 1}
\psfrag{b3}[][][1]{$R_2$} \psfrag{a3}[][][1]{}
\psfrag{Y}[][][1]{$Y$} \psfrag{t2}[][][1]{$$}
\psfrag{X1}[][][1]{$\hat X_1$} \psfrag{X2}[][][1]{$\hat X_2$}
\centerline{\includegraphics[width=13cm]{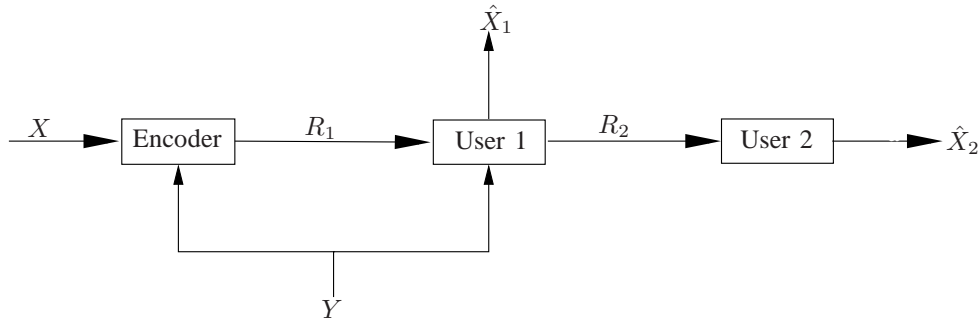}}
\caption{A cascade rate distortion problem with three nodes
(encoder, User 1, User 2), where the first two nodes have side
information $Y$. User 1 and User 2 need to reconstruct the sourse
$X$, within distortion criteria.} \label{f_cascade} }\end{figure}

More recently, Vasudevan, Tian and Diggavi
\cite{vasudevan_tian_diggavi_cascade_source06} considered the
cascade source coding problem, where side information, $Y$, is known
to the source encoder and to User 1, additional side information
$Z$ is known to User 2, and  the Markov chain $X-Z-Y$ holds.
Vasudevan et al.\cite{vasudevan_tian_diggavi_cascade_source06}
provided an inner and an outer bound and  showed that the bounds
coincide for the Gaussian case. Cuff, Su and El-Gammal
\cite{Cuff09_Gamal_Su_cascade} considered the cascade problem where
the side information is known only to the intermediate node  and
provided an inner and an outer bound. An additional related problem,
which was considered and solved in
\cite{Permuter_steinber_weissman08_volos2009},  is that of cascade
source coding when side information is known to all nodes with a
limited rate. Table \ref{t_literture} summarizes the literature  on
cascade source coding with side information.

 \begin{figure}[h!]{
\psfrag{b1}[][][1]{$X$} \psfrag{box1}[][][1]{Encoder}
\psfrag{box3}[][][1]{User 2}
 \psfrag{a2}[][][1]{$R_1$}
 \psfrag{t1}[][][1]{$$}

\psfrag{A1}[][][1]{$$} \psfrag{A2}[][][1]{$$} \psfrag{C}[][][1]{a}
\psfrag{B}[][][1]{b}\psfrag{A}[][][1]{c}

 \psfrag{box2}[][][1]{User 1}
\psfrag{b3}[][][1]{$R_2$} \psfrag{a3}[][][1]{}
\psfrag{Y}[][][1]{$Y$} \psfrag{t2}[][][1]{$$}
\psfrag{X1}[][][1]{$\hat X_1$} \psfrag{X2}[][][1]{$\hat X_2$}
\centerline{\includegraphics[width=13cm]{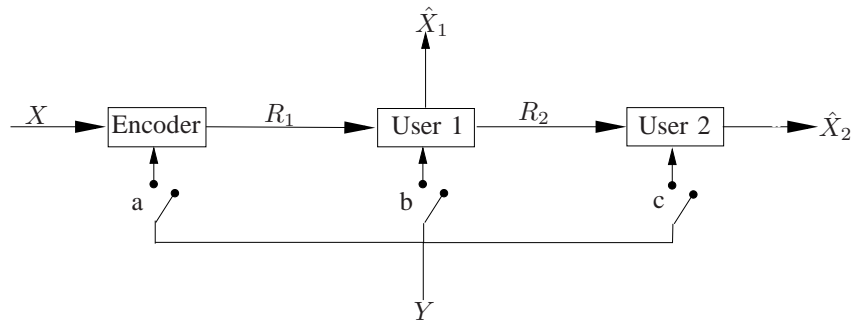}}
\caption{A cascade rate distortion problem with several options of
side information. Table \ref{t_literture} summarizes the
lietrtaure on this problem.} \label{f_all_side}
}\end{figure}

\begin{table}\label{t_literture}
\caption{literature overview of cascade source coding with side
information as shown in Fig. \ref{f_all_side}}
\begin{center}
\begin{tabular}{||c|c|c|c|c||}
\hline \hline Switch a& Switch b& Switch c& Gaussian quadratic case& General case\\
\hline \hline open & open & open& Solved \cite{Yamamoto_source_coding81}& Solved \cite{Yamamoto_source_coding81}\\
\hline open & open & closed& Solved \cite{vasudevan_tian_diggavi_cascade_source06}& Upper and lower bounds \cite{vasudevan_tian_diggavi_cascade_source06}\\
\hline open & closed & open& Upper and lower bounds \cite{Cuff09_Gamal_Su_cascade}&  Upper and lower bounds \cite{Cuff09_Gamal_Su_cascade}\\
\hline open & closed & closed& Solved \cite{vasudevan_tian_diggavi_cascade_source06}&  Upper and lower bounds \cite{vasudevan_tian_diggavi_cascade_source06}\\
\hline closed & open & open& Solved \cite{Yamamoto_source_coding81}& Solved \cite{Yamamoto_source_coding81}\\
\hline closed & open & closed& Solved \cite{vasudevan_tian_diggavi_cascade_source06}& Upper and lower bounds \cite{vasudevan_tian_diggavi_cascade_source06} \\
\hline closed & closed & open& Section \ref{s_gaussian}& Section \ref{s_problem_def}\\
\hline\hline
\end{tabular}
\end{center}
\end{table}

Of special interest in lossy source coding is the Gaussian case with
quadratic distortion, which in many source coding problems results
in an analytical solution such as in the Wyner-Ziv
problem\cite{Wyner78_gaussian_WZ} where side information is
available to the decoder, the Heegard-Berger
problem\cite{Heegard_Berger85_side_may_be_absent} where side
information at the decoder may be absent, Kaspi's problem
\cite{Kaspi94_may_be_absent,Perron_Diggavi_Emre_05_Gaussian_of_Kaspi}
where side information is known to the encoder and may or may not be
known to the decoder, the multiple description problem
\cite{GamalCover82,Ozarow_Gaussian_multiple_description}, the two-way
source coding problem \cite{Kaspi85_two_way}, the multi-terminal
problem \cite{Oohama97GaussianMultiterminal}
\cite{Wagner_Gaussian_multi_terminal}, the CEO problem
\cite{Berger_ceo_problem96,Viswanathan_Berger_ceo_Gaussian97,Oohama98_CEO_Gaussian},
 rate distortion with a helper
\cite{Vasudevan07_helper,Permuter_steinber_weissman08_helperunpublished},
 and successive refinement \cite{Equitz91Cover} and its extension to
successive refinement for the Wyner-Ziv problem
\cite{Steinberg_merhav04_sucessuve_refienment_wyner_ziv}.

 Our main result in
this paper is that the achievable region for the problem depicted in
Fig. \ref{f_cascade} is given by $\mathcal R(D_1,D_2)$, which is
defined as the set of all rate-pairs $(R_1,R_2)$ that satisfy
\begin{eqnarray}
R_2&\geq& I(Y,X;\hat X_2), \label{e_R1}  \\
R_1&\geq& I(X;\hat X_1,\hat X_2|Y),\label{e_R2}
\end{eqnarray}
for some joint distribution $P(x,y) P(\hat x_1,\hat x_2|x,y)$ for
which
\begin{eqnarray}\label{e_def_dist}
\mathbb{E}d_i(X,\hat X_i)&\leq& D_i,\ \ i=1,2.
\end{eqnarray}

An extension of the cascade source coding problem is the triangular
setting\cite{Yamamoto96TriangularSourceCoding}, where there is an additional
direct link  from the source encoder to User 2. We solve this
problem where side information exists at the source encoder and User
1, but not at User 2. 
%
%

The remainder of the paper is organized as follows. In Section
\ref{s_problem_def}, we formally define the cascade problem and
present the theorem establishing the achievable region. In Section
\ref{s_proof}, we provide a converse and achievability proofs of the
theorem, and in Section \ref{s_gaussian} we  explicitly compute the
rate region for the Gaussian case. In Section \ref{s_triangular} we
extend our result to the triangular case (cf. Fig. \ref{f_triangle}), and in Section \ref{s_extensions} we
further extend the results to multiple users and discuss the corresponding
empirical coordination problem.

\section{Cascade rate distortion: Problem definitions and main results\label{s_problem_def}}
\label{s_definition} Here we formally define the cascade
rate-distortion
 problem where side information is known to the source encoder and to User 1. We present a single-letter
characterization of the achievable region. We use the regular
definitions of rate distortion, and we follow the notation of
\cite{CovThom06}. The source sequences $\{X_i\in \mathcal X, \;
i=1,2,\cdots\}$,  and the side information sequence $\{Y_i\in
\mathcal Y,\; i=1,2,\cdots\}$ are discrete random variables drawn
from finite alphabets $\mathcal X$ and $\mathcal Y$, respectively.
The random variables $(X_i,Y_i)$ are  i.i.d. $\sim P(x,y)$. Let
$\hat{\cal X}_1$ and $\hat{\cal X}_2$ be the reconstruction
alphabets, and $d_i:\ {\cal X}\times{\hat{\cal X}_i}\rightarrow
[0,\infty)$, $i=1,2$, are single letter distortion measures.
Distortion between sequences is defined in the usual way
\begin{eqnarray}
d_i(x^n,\hat{x}_i^n) &=& \frac{1}{n} \sum_{j=1}^n
d_i (x_j,\hat{x}_{i,j}), \ \ i=1,2.
\end{eqnarray}
Let $\mathcal M_i$ denote a set of positive integers
$\{1,2,..,M_i\}$ for $i=1,2$.
\begin{definition}[Cascade rate distortion code with side information at the first two nodes]\label{def_code}
An $(n,M_1,M_2,D_1,D_2)$ code for source $X$ and side information
$Y$ consists of two encoders
\begin{eqnarray}
f_1 &:& \mathcal X^n \times \mathcal Y^n \to \mathcal M_1 \nonumber \\
f_2 &:& \mathcal Y^n \times \mathcal M_1  \to \mathcal M_2
\end{eqnarray}
and two decoders
\begin{eqnarray}
g_1 &:&  \mathcal Y^n \times \mathcal M_1
\to \hat{\cal X}_1^n \nonumber \\
g_2 &:&   \mathcal M_2 \to \hat{\cal X}_2^n
\end{eqnarray}
such that
\begin{eqnarray}\label{e_dist_cond}
\mathbb{E}\left[\frac{1}{n}\sum_{i=1}^n d_j(X_i,\hat
X_{j,i})\right]&\leq& D_j,\ \ j=1,2
\end{eqnarray}
\end{definition}
The rate pair $(R_1,R_2)$ of the $(n,M_1,M_2,D_1,D_2)$ code is
defined by
\begin{eqnarray}
R_i&=&\frac{1}{n}\log M_i; \;\; i=1,2.
\end{eqnarray}

\begin{definition}\label{def_achievable rates}
Given a distortion pair $(D_1,D_2)$, a rate pair $(R_1,R_2)$ is said
to be {\it achievable} if, for any $\epsilon>0$, and sufficiently
large $n$, there exists an
$(n,2^{nR_1},2^{nR_2},D_1+\epsilon,D_2+\epsilon)$ code for the
source $X$ with side information $Y$.
\end{definition}
\begin{definition}\label{def_the achievable_region}
{\it The (operational) achievable region} $\mathcal R^O(D_1,D_2)$ of
cascade rate distortion is
the closure of the set of all achievable rate pairs.
\end{definition}
Theorem \ref{t_cascade} is the main result of this work.
\begin{theorem}\label{t_cascade}
For the cascade rate distortion problem with side information at the
source and User 1, as depicted in Fig. \ref{f_cascade}, the
achievable region is given by
\begin{equation}
\mathcal R^O(D_1,D_2)=\mathcal R(D_1,D_2),
\end{equation}
where the region $\mathcal R(D_1,D_2)$ is defined in
(\ref{e_R1})-(\ref{e_def_dist}).
\end{theorem}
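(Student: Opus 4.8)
The plan is to establish the two inclusions $\mathcal R^O(D_1,D_2) \subseteq \mathcal R(D_1,D_2)$ (converse) and $\mathcal R(D_1,D_2) \subseteq \mathcal R^O(D_1,D_2)$ (achievability) separately, taking the reconstruction symbols themselves as the single-letter variables so that no extra auxiliary random variable is needed.

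For the converse, I would start from a sequence of good codes and set $M_1 = f_1(X^n,Y^n)$ and $M_2 = f_2(Y^n,M_1)$, noting that both $\hat X_{1,i}$ and $\hat X_{2,i}$ are functions of $(M_1,Y^n)$ (since $M_2$ is a function of $(M_1,Y^n)$). For the $R_2$ bound, expand $nR_2 \ge H(M_2) \ge I(M_2;X^n,Y^n)$ by the chain rule over the i.i.d. pairs; using that $(X_i,Y_i)$ is independent of $(X^{i-1},Y^{i-1})$ and that conditioning reduces entropy yields $nR_2 \ge \sum_i I(X_i,Y_i;\hat X_{2,i})$, matching (\ref{e_R1}). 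For the $R_1$ bound, expand $nR_1 \ge H(M_1\mid Y^n) \ge I(M_1;X^n\mid Y^n)=\sum_i I(M_1;X_i\mid Y^n,X^{i-1})$; memorylessness gives $H(X_i\mid Y^n,X^{i-1})=H(X_i\mid Y_i)$, and since $(\hat X_{1,i},\hat X_{2,i})$ is a function of $(M_1,Y^n)$ one obtains $nR_1 \ge \sum_i I(X_i;\hat X_{1,i},\hat X_{2,i}\mid Y_i)$, matching (\ref{e_R2}). The distortion constraints (\ref{e_def_dist}) follow directly from (\ref{e_dist_cond}). I would close with a time-sharing variable $Q$; crucially, because $(X,Y)$ is independent of $Q$, one has $H(X,Y\mid \hat X_2,Q)\le H(X,Y\mid \hat X_2)$ and $H(X\mid \hat X_1,\hat X_2,Y,Q)\le H(X\mid \hat X_1,\hat X_2,Y)$, so both bounds survive marginalizing out $Q$, yielding a single distribution of the required product form $P(x,y)P(\hat x_1,\hat x_2\mid x,y)$ that meets the distortions.

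For achievability, I would use superposition coding with $\hat X_2$ as the coarse layer and $\hat X_1$ as a refinement, combined with Wyner--Ziv binning against $Y$ on the encoder-to-User-1 link. Generate $\approx 2^{nI(X,Y;\hat X_2)}$ codewords $\hat X_2^n$, partition them into $\approx 2^{nI(X;\hat X_2\mid Y)}$ bins, and for each $\hat X_2^n$ generate a conditional codebook of $\hat X_1^n$ codewords binned at rate $\approx I(X;\hat X_1\mid \hat X_2,Y)$. The encoder finds a jointly typical $(\hat X_2^n,\hat X_1^n)$ for the observed $(X^n,Y^n)$ and transmits the two bin indices, whose total rate is $I(X;\hat X_2\mid Y)+I(X;\hat X_1\mid \hat X_2,Y)=I(X;\hat X_1,\hat X_2\mid Y)=R_1$. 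Using $Y^n$, User 1 decodes the unique bin-consistent $\hat X_2^n$ and then $\hat X_1^n$ (the Wyner--Ziv decoding succeeds because each bin holds on average one codeword jointly typical with $Y^n$), reconstructs $\hat X_1$, and forwards the \emph{full} index of $\hat X_2^n$ at rate $I(X,Y;\hat X_2)=R_2$; User 2, having no side information, recovers $\hat X_2^n$ directly from this index. Standard covering and packing lemmas plus typicality then give the distortion guarantees.

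The main obstacle I anticipate is in the achievability: the binning rates must be chosen so that all three decoding tasks succeed simultaneously, and in particular User 1 must recover the \emph{exact} $\hat X_2$ codeword chosen by the encoder (not merely some jointly typical one) so that the index it forwards is consistent with what User 2 decodes. This is what forces the coarse layer to be sent un-binned on the second link at the full rate $I(X,Y;\hat X_2)$ while being binned against $Y$ at rate $I(X;\hat X_2\mid Y)$ on the first link; verifying that these two rates, together with the refinement rate, collapse exactly to the bounds (\ref{e_R1})--(\ref{e_R2}) is the crux of the argument.
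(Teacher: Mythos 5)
Your proposal is correct and follows essentially the same route as the paper: the converse is the identical chain of inequalities (using that $T_2$ is a function of $(T_1,Y^n)$ and that the reconstructions are functions of the messages) closed by a time-sharing variable $Q$, and the achievability is the same Wyner--Ziv binning of the $\hat X_2$ codebook against $Y$ on the first link, forwarding of the full $\hat X_2$ index at rate $I(X,Y;\hat X_2)$ on the second link, and a conditional refinement layer for $\hat X_1$ at rate $I(X;\hat X_1\mid \hat X_2,Y)$. The only cosmetic difference is that the paper indexes the refinement codebook by $(y^n,\hat x_2^n)$ directly rather than binning it against $Y$; both give the same rate.
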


\section{Proof of Theorem \ref{t_cascade} \label{s_proof}}

{\bf Achievability:} The proof follows classical arguments, and
therefore the technical details will be omitted. We describe only
the coding structure and  justify why the indicated region is achievable. We fix
a joint distribution $P_{X,Y,\hat X_1,\hat X_2}$ for which
(\ref{e_def_dist}) holds, and an $\epsilon > 0$, and we show that
there exists a code with rates
\begin{eqnarray}
R_2&=& I(Y,X;\hat X_2)+\epsilon,   \\
R_1& = & I(X;\hat X_1,\hat X_2|Y)+3\epsilon,
\end{eqnarray}
complying with the distortion constraints.

Generate randomly $2^{n(I(X,Y;\hat X_2)+\epsilon)}$ codewords using
an i.i.d. $\sim P_{\hat X_2}$. Then bin the codewords into
$2^{n(I(X;\hat X_2|Y)+2\epsilon)}$ bins. In each bin, there are
$2^{n(I(X,Y;\hat X_2)-I(X;\hat X_2|Y)-\epsilon)}=2^{n(I(Y;\hat
X_2)-\epsilon)}$ codewords. In addition, for any typical sequences
$y^n,\hat x^n_2$ generate $2^{n(I(X;\hat X_1| Y,\hat
X_2)+\epsilon)}$ codewords using the pmf  $P(\hat x_1^n|y^n,\hat
x^n_2)=\prod_{i=1}^n P_{\hat X_1|Y,\hat X_2 }(\hat x_{1,i}|y_i,\hat
x_{2,i})$.

The source-encoder receives the sequences $x^n,y^n$ and first looks
for a codeword $\hat x_2^n$  that is jointly typical with $x^n,y^n$.
If there is such a codeword, the source encoder sends the index of the bin that includes this
codeword to User 1. User 1 looks which codeword in the received bin
is jointly typical with the side information $y^n$. Since there are
less than $2^{n(I(Y;\hat X_2)}$ in the bin, with high probability
only one codeword will be jointly typical with $y^n$ and it would be
the codeword sent by the encoder. User 1 then forwards the codeword
to User 2.

Now we can think of a new problem where the source-encoder and User
1 have side information $Y^n,\hat X_2^n$ and hence a rate $I(X;\hat
X_1|Y,\hat X_2)+\epsilon$ is needed to generate $\hat X^n_1$ that is
jointly typical with $(X^n,Y^n,\hat X_2)$. Therefore, a total rate to
User 1 of $R_1=I(X;\hat X_2|Y)+2\epsilon+I(X;\hat X_1|Y,\hat
X_2)+\epsilon=I(X;\hat X_1,\hat X_2|Y)+3\epsilon$ is needed, and an
additional rate $R_2= I(Y,X;\hat X_2)+\epsilon$ is needed from User
1 to User 2.

{\bf Converse:} Assume that we have an
$(n,M_1=2^{nR_1},M_2=2^{nR_2},D_1,D_2)$ code as in Definition
\ref{def_code}. We will show the existence of a joint distribution
$P_{X,Y,\hat X_1,\hat X_2}$ that
satisfies~(\ref{e_R1})-(\ref{e_def_dist}). Denote
$T_1=f_1(X^n,Y^n)\in\{1,...,2^{nR_1}\}$, and
$T_2=f_2(T_1,Y^n)\in\{1,...,2^{nR_2}\}$. Then,
\begin{eqnarray}\label{e_conv1}
nR_2 &\stackrel{}{\geq}&H(T_2) \nonumber \\
&\stackrel{}{\geq}&I(X^n,Y^n;T_2) \nonumber \\
&\stackrel{}{=}&\sum_{i=1}^n
H(X_i,Y_i)-H(X_i,Y_i|T_2,X^{i-1},Y^{i-1})\nonumber \\
&\stackrel{(a)}{=}&\sum_{i=1}^n
H(X_i,Y_i)-H(X_i,Y_i|\hat X_{2,i},T_2,X^{i-1},Y^{i-1})\nonumber \\
&\stackrel{}{\geq}&\sum_{i=1}^n I(X,Y;\hat X_{2,i}),
\end{eqnarray}
where equality (a) follows from the fact that the reconstruction at
time $i$,  $\hat X_{2,i}$, is a deterministic function of $T_2$. Now
consider
\begin{eqnarray}\label{e_conv2}
nR_1 &\stackrel{}{\geq}&H(T_1) \nonumber \\
&\stackrel{}{\geq}&H(T_1|Y^n) \nonumber \\
&\stackrel{(a)}{=}&H(T_1,T_2|Y^n) \nonumber \\
&\stackrel{}{\geq}&I(X^n;T_1,T_2|Y^n) \nonumber \\
&\stackrel{}{=}&\sum_{i=1}^n H(X_i|Y_i)- H(X_i|Y^n,T_1,T_2,X^{i-1})\nonumber \\
&\stackrel{(b)}{=}&\sum_{i=1}^n H(X_i|Y_i)- H(X_i|Y^n,T_1,T_2,X^{i-1},\hat X_{1,i},\hat X_{2,i})\nonumber \\
&\stackrel{}{\geq}&\sum_{i=1}^n H(X_i|Y_i)- H(X_i|Y_i,\hat X_{1,i},\hat X_{2,i})\nonumber \\
&\stackrel{}{=}&\sum_{i=1}^n I(X_i;\hat X_{1,i},\hat X_{2,i}|Y_i),
\end{eqnarray}
where equality (a) follows from the fact that $T_2$ is a
deterministic function of $T_1$ and $Y^n$, and, similarly, equality
(b) follows from the fact that $\hat X_{1,i}$ and  $\hat X_{2,i}$
are deterministic functions of $(T_1,Y^n)$ and $T_2$, respectively.

The proof is concluded in the standard way by letting $Q$ be a
random variable independent of $X^n,Y^n$,  uniformly distributed
over the set $\{1,2,3,..,n\}$, and considering the joint
distribution of $X_Q, Y_Q, \hat X_{1,Q}, \hat X_{2,Q}$. For this
joint distribution, inequalities (\ref{e_conv1}) and (\ref{e_conv2})
imply that (\ref{e_R1}) and (\ref{e_R2}) hold, respectively, and
(\ref{e_dist_cond}) implies that (\ref{e_def_dist}) holds. \hfill
\QED

\section{Cascade rate distortion: the Gaussian case \label{s_gaussian}}
In this section we explicitly calculate the rate region $\mathcal
R(D_1,D_2)$ for the cases where $X$ and $Y$ are jointly Gaussian and
the distortion is the square-error distortion. The converse and the
achievability in the previous sections are proved for the finite
alphabet case, but it can be extended to the  Gaussian case {\cite{Wyner78_gaussian_WZ}}.

Our first step in finding the achievable region for the quadratic
Gaussian case is to show that it suffices to consider only jointly
Gaussian distributions $P_{X,Y,\hat X_1, \hat X_2}$ in order to
exhaust the rate region. Then we solve an  optimization problem to
find the achievable rate-region explicitly.

\begin{lemma}[Optimality of jointly Gaussian distributions]\label{l_gauusian}
For the quadratic Gaussian cascade rate-distortion problem with side
information known to the source-encoder and to User 1, i.e., $X,Y$
are jointly Gaussian and $d_1(x,\hat x_1)=(x-\hat x_1)^2$, $d_2(x,\hat
x_2)=(x-\hat x_2)^2$, it suffices to consider only jointly Gaussian
distributions $P_{X,Y,\hat X_1, \hat X_2}$ in order to exhaust the
rate region $\mathcal R(D_1,D_2)$ given in
(\ref{e_R1})-(\ref{e_def_dist}).
\end{lemma}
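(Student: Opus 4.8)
The plan is to prove that Gaussian distributions exhaust the region by working with the two mutual-information expressions in (\ref{e_R1}) and (\ref{e_R2}) and showing that for any feasible joint distribution $P_{X,Y,\hat X_1,\hat X_2}$ (with $X,Y$ jointly Gaussian) one can find a jointly Gaussian distribution achieving no larger rates while meeting the same distortion constraints. First I would fix an arbitrary feasible distribution and let $\tilde X_1, \tilde X_2$ denote the conditional expectations (MMSE estimates) $\mathbb{E}[X\mid \hat X_1, Y]$ and $\mathbb{E}[X\mid \hat X_2]$, or work directly with the error variances; the key scalar quantities are the distortions $D_1, D_2$ and the two rate functionals. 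The natural strategy is to lower-bound each rate term by an expression depending only on second-order statistics, and then exhibit a jointly Gaussian $P_{X,Y,\hat X_1,\hat X_2}$ that simultaneously attains those lower bounds.

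For the first constraint I would bound $I(Y,X;\hat X_2)$ from below. Since $\hat X_2$ must satisfy $\mathbb{E}(X-\hat X_2)^2\le D_2$, I would use the entropy-power / maximum-entropy principle: writing $I(X,Y;\hat X_2)=h(X,Y)-h(X,Y\mid \hat X_2)$ and bounding the conditional entropy by that of a Gaussian with the appropriate conditional covariance, which is controlled by the distortion. For the second constraint $I(X;\hat X_1,\hat X_2\mid Y)=h(X\mid Y)-h(X\mid \hat X_1,\hat X_2,Y)$, I would again upper-bound the conditional differential entropy $h(X\mid \hat X_1,\hat X_2,Y)$ by $\frac12\log(2\pi e\,\mathrm{Var}(X\mid\hat X_1,\hat X_2,Y))$, and then observe that the relevant conditional variance is at least the MMSE achievable under the distortion constraint $D_1$ together with the side information $Y$. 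The classical Wyner--Ziv argument (cited as \cite{Wyner78_gaussian_WZ}) handles exactly this kind of conditional bounding, and I would invoke it to reduce the problem to its Gaussian counterpart.

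The heart of the proof, and the step I expect to be the main obstacle, is showing that a \emph{single} jointly Gaussian $P_{X,Y,\hat X_1,\hat X_2}$ can meet both rate lower bounds with equality \emph{simultaneously}, rather than each one separately. The two rate expressions couple $\hat X_1$ and $\hat X_2$ through the joint term $I(X;\hat X_1,\hat X_2\mid Y)$, so the candidate Gaussian test channel must reproduce the correct conditional variances $\mathrm{Var}(X\mid\hat X_2)$, $\mathrm{Var}(X\mid Y,\hat X_2)$, and $\mathrm{Var}(X\mid Y,\hat X_1,\hat X_2)$ all at once. I would parametrize the Gaussian candidate by backward test channels of the form $\hat X_2 = aX + bY + N_2$ and $\hat X_1 = cX + dY + e\hat X_2 + N_1$ with independent Gaussian noises, and then argue that the free parameters suffice to match the three conditional error variances that appear in the bounds. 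The argument that the matching is feasible typically reduces to showing that the achievable (variance) triple from the original distribution lies in the set realizable by Gaussian channels, which follows because conditional variances are themselves second-order quantities and the Gaussian family saturates the maximum-entropy inequalities used in the lower bounds.

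Once the lower bounds and the matching Gaussian construction are in place, the proof concludes by noting that the Gaussian distribution satisfies the distortion constraints (by construction of its error variances) and achieves rates no larger than those of the arbitrary feasible point, so every point in $\mathcal R(D_1,D_2)$ is dominated by a point attainable with a jointly Gaussian distribution; hence Gaussian distributions exhaust the region. I would state explicitly that the only inequalities used are the conditional maximum-entropy bounds, each of which is tight precisely for Gaussians, which is what forces the extremal distribution to be Gaussian.
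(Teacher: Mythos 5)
Your overall strategy---maximum-entropy bounds that are tight only for Gaussians---is the right family of ideas, but the step you yourself flag as ``the main obstacle'' is exactly where the proposal has a genuine gap, and the justification you offer for it does not hold. You propose to lower-bound each rate by a function of certain conditional variances and then to build a Gaussian backward test channel matching the triple $\mathrm{Var}(X\mid\hat X_2)$, $\mathrm{Var}(X\mid Y,\hat X_2)$, $\mathrm{Var}(X\mid Y,\hat X_1,\hat X_2)$, arguing feasibility ``because conditional variances are themselves second-order quantities.'' That claim is false for a general non-Gaussian feasible law: the MMSE conditional variance is not determined by the covariance matrix of $(X,Y,\hat X_1,\hat X_2)$; only the \emph{linear} MMSE is. Moreover, matching conditional variances of $X$ given the reconstructions does not by itself preserve the actual distortions $\mathbb{E}(X-\hat X_i)^2$ (a reconstruction need not be an unbiased estimate of $X$), so your closing claim that the constructed Gaussian satisfies the distortion constraints ``by construction'' is not established. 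Finally, bounding $R_2$ by a quantity controlled only by $D_2$ would erase the $R_1$--$R_2$ trade-off that the true region exhibits (cf.\ Theorem \ref{t_cascade_gaussian}); the bounds must retain the full second-order structure of the particular feasible law, not just $D_1,D_2$.

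The paper closes this gap with one choice that makes the simultaneity problem disappear: given any feasible $P_{X,Y,\hat X_1,\hat X_2}$, take $\tilde P$ to be the jointly Gaussian law with the \emph{same covariance matrix} $K$. The distortions $\mathbb{E}(X-\hat X_i)^2$ are functions of $K$, so they are preserved exactly. For each rate expression one writes, e.g., $I(X;\hat X_1,\hat X_2\mid Y)=h(X\mid Y)-h\bigl(X-(\alpha_1\hat X_1+\alpha_2\hat X_2+\alpha_3 Y)\mid \hat X_1,\hat X_2,Y\bigr)$ with $(\alpha_1,\alpha_2,\alpha_3)$ the linear MMSE coefficients (functions of $K$ only), drops the conditioning, and applies the max-entropy bound to the error variable, whose variance again depends only on $K$; every inequality in this chain is tight under $\tilde P$, so both mutual informations can only decrease when $P$ is replaced by $\tilde P$ (the term $I(Y,X;\hat X_2)$ is handled by splitting it as $I(Y;\hat X_2)+I(X;\hat X_2\mid Y)$ and repeating the argument). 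This single Gaussian works for both rate constraints and both distortions at once, with no test-channel parametrization or separate feasibility argument. If you replace your MMSE quantities by linear MMSE quantities and take the matching-covariance Gaussian as your candidate, your outline collapses into the paper's proof.
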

\begin{proof}
Let us fix a point $(R_1,R_2,D_1,D_2)$ in the rate region and let
$P_{X,Y,\hat X_1, \hat X_2}$ be a joint distribution that satisfies
(\ref{e_R1})-(\ref{e_def_dist}). Such a distribution must exist
since  Inequalities (\ref{e_R1})-(\ref{e_def_dist}) define the rate
region (Theorem \ref{t_cascade}). Let $K$ denote the covariance
matrix induced by $P_{X,Y,\hat X_1, \hat X_2}$ and let $\tilde
P_{X,Y,\hat X_1, \hat X_2}$ denote a normal joint distribution with
mean zero and covariance matrix $K$. Now let us show that
(\ref{e_R1})-(\ref{e_def_dist}) also hold where the joint
distribution is $\tilde P_{X,Y,\hat X_1, \hat X_2}$.  Inequality
(\ref{e_def_dist}) is automatically satisfied, since it depends  on the distribution of $(X,Y,\hat{X}_1,\hat{X}_2)$ only through the covariance matrix $K$. Consider,
\begin{eqnarray}\label{e_gauss}
R_1&\geq& I(X;\hat X_1,\hat X_2|Y),\nonumber \\
&=& h(X|Y)-h(X|\hat X_1,\hat X_2,Y),\nonumber \\
&\stackrel{(a)}{=}& h(X|Y)-h(X-(\alpha_1\hat X_1+\alpha_2\hat X_2+\alpha_3 Y)|\hat X_1,\hat X_2,Y),\nonumber \\
&\stackrel{(b)}{\geq}& h(X|Y)-h(X-(\alpha_1\hat X_1+\alpha_2\hat X_2+\alpha_3 Y))\nonumber \\
&\stackrel{(c)}{\geq}& h(X|Y)-h_{\tilde P}(X-(\alpha_1\hat
X_1+\alpha_2\hat X_2+\alpha_3 Y))\nonumber \\
&\stackrel{(d)}{=}&  I_{\tilde P}(X;\hat X_1,\hat X_2|Y),
\end{eqnarray}
equality (a) is true for any set of scalars
$(\alpha_1,\alpha_2,\alpha_3)$ and in particular if we choose those
that are the linear estimator of $X$ given $\hat X_1,\hat X_2, Y$.
Note that the coefficients  $(\alpha_1,\alpha_2,\alpha_3)$  and the
variance $E(X-(\alpha_1\hat X_1+\alpha_2\hat X_2+\alpha_3 Y))^2$
 are a function only of the covariance matrix $K$. Inequality (b) follows from the fact
that conditioning reduces entropy, and (c) follows from the fact
that, given a variance, the Gaussian distribution
maximizes the differential entropy. 
The term  $I_{\tilde P}(X;\hat X_1,\hat X_2|Y)$ denotes the mutual
information induced by the Gaussian distribution $\tilde P_{X,Y,\hat
X_1, \hat X_2}$, and equality (d) follows from the fact that for the
Gaussian distribution the error, i.e.,  $X-(\alpha_1\hat
X_1+\alpha_2\hat X_2+\alpha_3 Y)$, is independent of the
observations $\hat X_1,\hat X_2,Y$.

Similarly, we have
\begin{eqnarray}
R_2&\geq& I(Y,X;\hat X_2) \nonumber   \\
&=& I(Y; \hat X_2)+I(X; \hat X_2|Y) \nonumber   \\
&\stackrel{}{\geq}& I_{\tilde P}(Y; \hat X_2)+I_{\tilde P}(X; \hat
X_2|Y),
\end{eqnarray}
where the last inequality follows from the same steps as
(\ref{e_gauss}).
\end{proof}

The next theorem provides an explicit expression for the Gaussian
case. The proof is provided in Appendix \ref{s_app_gauss_cascade}
and is based on Lemma \ref{l_gauusian} and on solving an
optimization problem with quadratic constraints and a linear
objective.
\begin{theorem}[Cascade Gaussian case] \label{t_cascade_gaussian}
The rate region of the cascade source coding with side information
at the first two nodes, where the source $X$ and the side
information $Y=X+Z$  are jointly Gaussian distributed, where $X$ and $Z$ are mutually independent, and the
distortion is quadratic, is given by
\begin{equation}\label{e_gaussian_R1}
 R_1(D_1,D_2, R_2)=\frac{1}{2}\max\left(\log
\frac{\sigma_{X|Y}^2}{\sigma_{X|W,Y}^2},\log
\frac{\sigma_{X|Y}^2}{D_1},0 \right),
\end{equation}
where $\sigma_{X|W,Y}^2$ is given by the following four cases

\begin{equation}\label{e_gaussian}
 \sigma_{X|W,Y}^2(D_1,D_2, R_2)=\left\{\begin{array}{ll}
 \left(\frac{2^{2R_2}D_2-\sigma_X^2}{\sigma_Z^2\sigma_X^2\alpha^2}
+\sigma_{X|Y}^{-2}\right)^{-1}, & \text{if } D_2\leq \sigma_{X|Y}^2
\text{ and } \frac{\sigma_X^2}{D_2}\leq 2^{2R_2}\leq
\frac{\sigma_Z^2(\sigma_X^2-D_2)}{\sigma_Z^2\sigma_X^2-D_2\sigma_Z^2-D_2\sigma_X^2}\frac{\sigma_X^2}{D_2}\\
 D_2,& \text{if } D_2\leq
\sigma_{X|Y}^2 \text{ and } 2^{2R_2}\geq
\frac{\sigma_Z^2(\sigma_X^2-D_2)}{\sigma_Z^2\sigma_X^2-D_2\sigma_Z^2-D_2\sigma_X^2}\frac{\sigma_X^2}{D_2}\\
\left(\frac{2^{2R_2}D_2-\sigma_X^2}{\sigma_Z^2\sigma_X^2\alpha^2}
+\sigma_{X|Y}^{-2}\right)^{-1}, & \text{if } D_2\geq \sigma_{X|Y}^2
\text{ and } \frac{\sigma_X^2}{D_2}\leq 2^{2R_2}\leq
\frac{\sigma_X^4}{\sigma_X^2D_2+\sigma_Z^2D_2-\sigma_X^2\sigma_Z^2}\\
\sigma_{X|Y}^2, &\text{if } D_2\geq \sigma_{X|Y}^2, \text{ and }
2^{2R_2}\geq
\frac{\sigma_X^4}{\sigma_X^2D_2+\sigma_Z^2D_2-\sigma_X^2\sigma_Z^2}\\
 \end{array}\right.
\end{equation}
and $\alpha=\left( \frac{\sigma_Z}{\sigma_X}
\sqrt{\frac{\sigma_X^2-D_2}{D_2-\sigma_X^2
2^{-2R_2}}}-1\right)^{-1}.$
\end{theorem}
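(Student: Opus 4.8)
The plan is to apply Lemma~\ref{l_gauusian} to restrict attention to jointly Gaussian $P_{X,Y,\hat X_1,\hat X_2}$ and then to solve the resulting finite--dimensional optimization in closed form. Writing $Y=X+Z$ with $X\sim\mathcal N(0,\sigma_X^2)$ and $Z\sim\mathcal N(0,\sigma_Z^2)$ independent, the quantity $\sigma_{X|Y}^2=\sigma_X^2\sigma_Z^2/(\sigma_X^2+\sigma_Z^2)$ is a fixed constant, and I write $\sigma_{X|\hat X_2,Y}^2=\mathbb{E}[(X-\mathbb{E}[X\mid \hat X_2,Y])^2]$ for the residual error that remains at User~1 once the description intended for User~2 has been forwarded; the quantity $\sigma_{X|W,Y}^2$ in the statement will emerge as the maximum of this residual over all admissible $\hat X_2$.

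First I would reduce the expression for $R_1$. Decomposing
\begin{equation}
I(X;\hat X_1,\hat X_2\mid Y)=I(X;\hat X_2\mid Y)+I(X;\hat X_1\mid \hat X_2,Y),
\end{equation}
I would argue that for a fixed $\hat X_2$ it is optimal to let $\hat X_1$ \emph{refine} $\hat X_2$ at User~1: if $\sigma_{X|\hat X_2,Y}^2\le D_1$ then $\hat X_1=\mathbb{E}[X\mid \hat X_2,Y]$ already meets the distortion and the second term vanishes, whereas if $\sigma_{X|\hat X_2,Y}^2>D_1$ then driving the error down to $D_1$ costs $\tfrac12\log(\sigma_{X|\hat X_2,Y}^2/D_1)$ and the two terms telescope to $\tfrac12\log(\sigma_{X|Y}^2/D_1)$. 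Hence, for each $\hat X_2$,
\begin{equation}
R_1=\tfrac12\max\!\left(\log\frac{\sigma_{X|Y}^2}{\sigma_{X|\hat X_2,Y}^2},\ \log\frac{\sigma_{X|Y}^2}{D_1},\ 0\right).
\end{equation}
Only the first term depends on $\hat X_2$ and it decreases in $\sigma_{X|\hat X_2,Y}^2$, so minimizing $R_1$ amounts to \emph{maximizing} $\sigma_{X|\hat X_2,Y}^2$ subject to $\mathbb{E}(X-\hat X_2)^2\le D_2$ and $I(X,Y;\hat X_2)\le R_2$, which, upon denoting the maximal residual by $\sigma_{X|W,Y}^2$, is exactly~(\ref{e_gaussian_R1}).

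The heart of the proof---and its main obstacle---is this scalar maximization. I would parameterize the jointly Gaussian $\hat X_2$ by its loading on $X$ and $Z$ and rewrite both the distortion constraint and the rate constraint, using $I(X,Y;\hat X_2)=I(Y;\hat X_2)+\tfrac12\log(\sigma_{X|Y}^2/\sigma_{X|\hat X_2,Y}^2)$, as explicit algebraic relations in these parameters. The objective is capped by two ceilings, $\sigma_{X|\hat X_2,Y}^2\le\sigma_{X|Y}^2$ (from conditioning on $\hat X_2$ in addition to $Y$) and $\sigma_{X|\hat X_2,Y}^2\le\sigma_{X|\hat X_2}^2\le D_2$. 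A Lagrangian/KKT analysis then shows that the maximum is attained either with the rate constraint slack, so the objective saturates---giving $\sigma_{X|W,Y}^2=D_2$ when $D_2\le\sigma_{X|Y}^2$ and $\sigma_{X|W,Y}^2=\sigma_{X|Y}^2$ when $D_2\ge\sigma_{X|Y}^2$---or with the rate constraint binding, in which case both constraints are tight and solving the resulting quadratic system yields the interior value $\big(\tfrac{2^{2R_2}D_2-\sigma_X^2}{\sigma_Z^2\sigma_X^2\alpha^2}+\sigma_{X|Y}^{-2}\big)^{-1}$ with the optimal loading $\alpha$ as stated.

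The delicate bookkeeping will be to pin down the exact $R_2$--thresholds that separate the rate--binding from the rate--slack regime in each of the two sign cases $D_2\lessgtr\sigma_{X|Y}^2$; these follow by substituting the boundary value of $\sigma_{X|W,Y}^2$ into the tight rate equation and solving for $2^{2R_2}$, producing the four conditions in~(\ref{e_gaussian}). I would conclude by verifying feasibility---reality and positivity of $\alpha$ and of all variances throughout the indicated regions---and the monotonicity of $\sigma_{X|W,Y}^2$ in $R_2$, so that the four pieces join continuously into a single solution.
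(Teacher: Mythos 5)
Your proposal is correct and follows essentially the same route as the paper's Appendix~\ref{s_app_gauss_cascade}: reduce to jointly Gaussian laws via Lemma~\ref{l_gauusian}, telescope $I(X;\hat X_1,\hat X_2|Y)$ so that $R_1$ depends on $\hat X_2$ only through the residual $\sigma^2_{X|\hat X_2,Y}$, and then maximize that residual subject to the $D_2$ and $R_2$ constraints, with the same four-case split between the rate-slack regime (saturation at $D_2$ or $\sigma^2_{X|Y}$) and the rate-binding regime (both constraints tight, yielding the $\alpha$-formula). The only difference is presentational: the paper parameterizes $W=X+\alpha Y+Z_2$ and solves the constrained maximization of $\sigma_{Z_2}^2$ geometrically as the intersection of two quadratics in $\alpha$, where you propose an equivalent KKT analysis.
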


Fig. \ref{f_gauss} depicts the regions for two specific values of
$D_1$ and $D_2$ such that it captures all four cases of  Eq.
(\ref{e_gaussian}).
\begin{figure}[h!]{
\psfrag{a}[][][1]{Case (a)} \psfrag{b}[][][1]{Case (b)}
\psfrag{c}[][][1]{Case (c)} \psfrag{d}[][][1]{Case (d)}
\psfrag{R1}[][][1]{$R_1$} \psfrag{R2}[][][1]{$R_2$}
\psfrag{t}[][][1]{$\downarrow$}
\psfrag{t2}[][][0.7]{$\frac{1}{2}\log\frac{\sigma_X^2}{D_2}$}
\psfrag{t5}[][][0.7]{$\frac{1}{2}\log\frac{\sigma_X^2}{D_2}$}
\psfrag{t6}[][][0.7]{$\;\;\;\;\;\;\;\;\;\;\;\;\;\;\;\;\;\;\;\;\;\;\;\;
\frac{1}{2}\log\frac{\sigma_X^4}{\sigma_X^2D_2+\sigma_Z^2D_2-\sigma_X^2\sigma_Z^2}$}

\psfrag{t3}[][][0.7]{$\;\;\;\;\;\;\;\;\;\;\;\;\;\;\;\;\;\;\;\;\;\;\;\;\;\;\;\;\;
\frac{1}{2}\log\frac{\sigma_Z^2(\sigma_X^2-D_2)}{\sigma_Z^2\sigma_X^2-D_2\sigma_Z^2-D_2\sigma_X^2}\frac{\sigma_X^2}{D_2}$}

\psfrag{t1}[][][0.7]{$\frac{1}{2}\max\left(\log
\frac{\sigma_{X|Y}^2}{D_2},\log
\frac{\sigma_{X|Y}^2}{D_1}\right)\longrightarrow\;\;\;\;\;\;\;\;\;\;\;\;\;\;\;\;\;\;\;\;\;\;\;\;\;\;\;\;\;\;\;\;\;$}
\psfrag{t4}[][][0.7]{$\;\;\;\;\;\;\;\;\;\;\;\;\;\;\;\;\;\;\;\;\;\;\;\;\;\;\;\;\;\;\;\longleftarrow
\frac{1}{2}\max\left(\log \frac{\sigma_{X|Y}^2}{D_2},\log
\frac{\sigma_{X|Y}^2}{D_1}\right)$}

\centerline{\includegraphics[width=12cm]{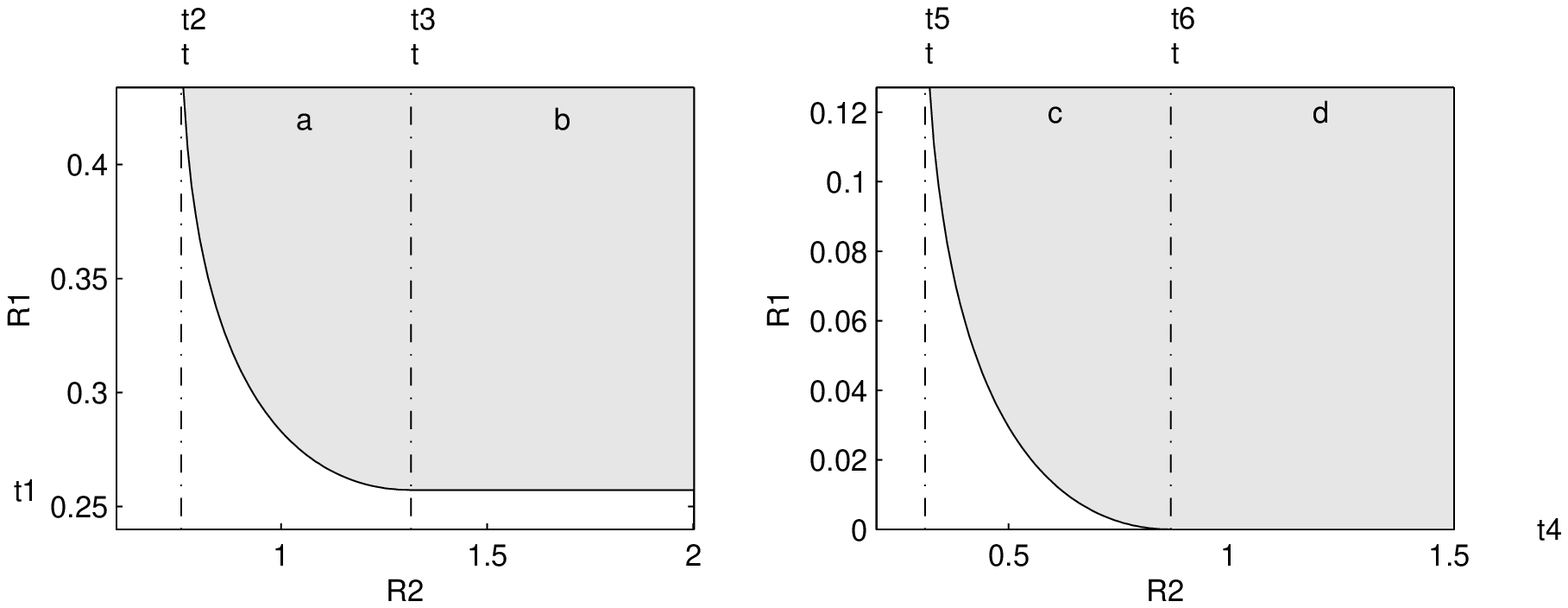}}
\caption{The Gaussian quadratic rate region. The graph on the left
hand side shows the rate region for the case where
$\sigma_X^2=\sigma_Z^2=1$, $D_2=0.35$ and $D_1=0.4$. Since
$D_2<\sigma_{X|Y}^2$, the rate region is given by Cases (a) and (b)
in Eq. (\ref{e_gaussian}). The right hand side graph shows the rate
region for the case where $\sigma_X^2=\sigma_Z^2=1$, $D_2=0.65$ and
$D_1=0.5$. Since $D_2>\sigma_{X|Y}^2$, the rate region is given by
Cases (c) and (d) in Eq. (\ref{e_gaussian})} \label{f_gauss}
}\end{figure}

Now, let us consider several extreme cases that can be easily solved
using Theorem \ref{t_cascade_gaussian}.
\subsubsection{Side information is independent of the source $ X\perp Y$} This
means that $\sigma_{X|Y}^2=\sigma_{X}^2$ and $\sigma_{Z}^2=\infty$.
For such a case (\ref{e_gaussian}) becomes


\begin{equation}
 \sigma_{X|W,Y}^2(D_1,D_2, R_2)=\left\{\begin{array}{ll}
 \sigma_{X}^{2}, & \text{if } D_2\leq \sigma_{X}^2
\text{ and } \frac{\sigma_X^2}{D_2}\leq 2^{2R_2}\leq
\frac{\sigma_X^2}{D_2}\\
 D_2,& \text{if } D_2\leq
\sigma_{X}^2 \text{ and } 2^{2R_2}\geq
\frac{\sigma_X^2}{D_2}\\
\infty, &\text{if } D_2\geq \sigma_{X}^2, \text{ and }
2^{2R_2}\geq 0\\
 \end{array}\right.
\end{equation}
and this implies that
\begin{equation}
 R_1(D_1,D_2, R_2)=\frac{1}{2}\max\left(\log
\frac{\sigma_{X}^2}{D_2},\log \frac{\sigma_{X|Y}^2}{D_1},0 \right),
\end{equation}
recovering a result that appears in the successive refinement source
coding paper \cite{Equitz91Cover}.

\subsubsection{Side information equals the source, i.e., $X=Y$}
For this case, $\sigma_{X|Y}^2=0$; hence $R_1=0$ and $2^{2R_2}\geq
\frac{\sigma_X^2}{D_2}$,  consistent with the well known rate distortion function of the Gaussian source.

\subsubsection{$R_2\to \infty$} If $D_2\leq\sigma_{X|Y}^2$ then
\begin{equation}
 R_1(D_1,D_2, R_2)=\frac{1}{2}\max\left(\log
\frac{\sigma_{X|Y}^2}{D_2},\log \frac{\sigma_{X|Y}^2}{D_1},0
\right),
\end{equation}
and if $D_2\geq\sigma_{X|Y}^2$
\begin{equation}
 R_1(D_1,D_2, R_2)=\frac{1}{2}\max\left(\log
\frac{\sigma_{X|Y}^2}{D_1},0 \right).
\end{equation}
Note that for this case  we can assume that the side information $Y$ is
known to all three nodes; hence only $\sigma_{X|Y}^2$ is manifested in
the expression.

\subsubsection{The message that User 2 receives depends only on the side
information} In this extreme case, the rate $R_2$ and the
distortion $D_2$ are large enough so that the message that User 2
receives depends only on the side information. This case is depicted
in Fig. \ref{f_cascade_extereme}.
 \begin{figure}[h!]{
\psfrag{b1}[][][1]{$X$} \psfrag{box1}[][][1]{Encoder}
\psfrag{box3}[][][1]{User 2}
 \psfrag{a2}[][][1]{$R_1$}
 \psfrag{t1}[][][1]{$$}

\psfrag{A1}[][][1]{$$} \psfrag{A2}[][][1]{$$}

 \psfrag{box2}[][][1]{User 1}
\psfrag{b3}[][][1]{$R_2$} \psfrag{a3}[][][1]{}
\psfrag{Y}[][][1]{$Y$} \psfrag{t2}[][][1]{$$}
\psfrag{X1}[][][1]{$\hat X_1$} \psfrag{X2}[][][1]{$\hat X_2$}
\centerline{\includegraphics[width=13cm]{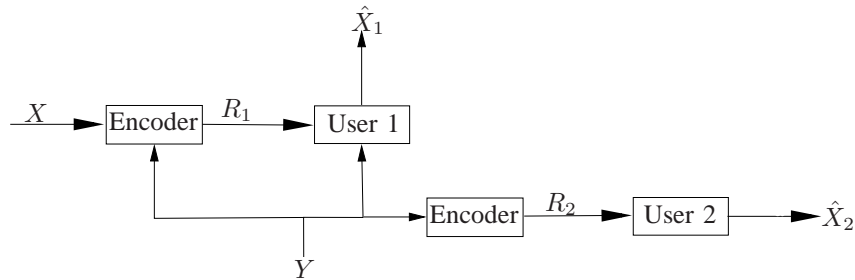}}
\caption{An extreme case where the rate $R_2$ and the distortions
$D_2$ are large enough so that the message that User 2 receives
depends only on the side information.} \label{f_cascade_extereme}
}\end{figure}

For this extreme, the rate region is simply
\begin{eqnarray}
R_1&\geq& I(X;\hat X_1|Y),\nonumber \\
R_2&\geq& I(Y;\hat X_2),
\end{eqnarray}
for all joint Gaussian distributions that satisfy $\sigma_{X|Y,\hat
X_1}^2\leq D_1$ and $\sigma_{X|\hat X_2}^2\leq D_2$.

More explicitly, this region is given by
\begin{eqnarray}
D_2&\geq& \frac{\sigma_X^2(\sigma_X^2
2^{-2R_2}+\sigma_Z^2)}{\sigma_X^2+\sigma_Z^2} \label{e_D2_side_only}\\
 R_1&\geq& \frac{1}{2}\max\left(\log
\frac{\sigma_{X|Y}^2}{D_1},0 \right).
\end{eqnarray}

Indeed, if (\ref{e_D2_side_only}) holds, then according to Theorem
\ref{t_cascade_gaussian}, $R_1(D_1,D_2,R_2)=
\frac{1}{2}\max\left(\log \frac{\sigma_{X|Y}^2}{D_1},0 \right).$

\section{Triangular source coding with side
information}\label{s_triangular}
In this section, we extend the cascade source coding discussed in
previous sections by adding a direct link from the encoder to the
second user, as depicted in  Fig. \ref{f_triangle}. The definition
of the code $(n,M_1,M_2,M_3, D_1,D_2)$ is similar to the
one given in Def. \ref{def_code} for the cascade case, with an
additional message $M_3$ at rate $R_3$  sent from the source to User 2.

 \begin{figure}[h!]{
\psfrag{b1}[][][1]{$X$} \psfrag{box1}[][][1]{Encoder}
\psfrag{box3}[][][1]{User 2}
 \psfrag{a2}[][][1]{$R_1$}
 \psfrag{t1}[][][1]{$$}

\psfrag{A1}[][][1]{$$} \psfrag{A2}[][][1]{$$}

 \psfrag{box2}[][][1]{User 1}
\psfrag{b3}[][][1]{$R_2$} \psfrag{c1}[][][1]{$R_3$}
\psfrag{a3}[][][1]{} \psfrag{Y}[][][1]{$Y$} \psfrag{t2}[][][1]{$$}
\psfrag{X1}[][][1]{$\hat X_1$} \psfrag{X2}[][][1]{$\hat X_2$}
\centerline{\includegraphics[width=10cm]{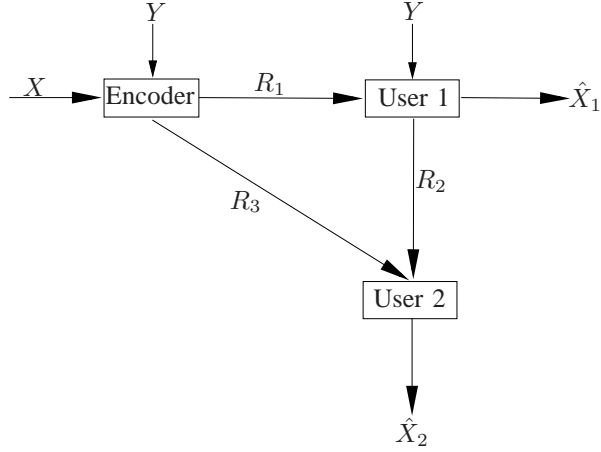}}
\caption{A triangular rate distortion problem with three nodes
(encoder, User 1, User 2), where  side information $Y$ is known to
the encoder and User 1, but not to User 2. User 1 and User 2 need to
reconstruct the sourse $X$ to within distortion criteria.}
\label{f_triangle} }
\end{figure}

\subsection{Main theorem and its proof}
\begin{theorem}[The achievable rate region for the triangular case]\label{t_triangle}
The achievable region for the problem depicted in Fig.
\ref{f_triangle} is given by $\mathcal R_{\Delta}(D_1,D_2)$, which
is defined as the set of all rate-triples $(R_1,R_2, R_3)$ that
satisfy
\begin{eqnarray}
R_1&\geq& I(X;\hat X_1,U|Y),\label{e_R1_tri}\\
R_2&\geq& I(Y,X;U), \label{e_R2_tri}  \\
R_3&\geq& I(X;\hat X_2|U),\label{e_R3_tri}
\end{eqnarray}
for some joint distribution $P(x,y) P(\hat x_1,\hat x_2, u|x,y)$ satisfying
\begin{eqnarray}\label{e_def_dist_tri}
\mathbb{E}d_i(X,\hat X_i)&\leq& D_i,\ \ i=1,2,
\end{eqnarray}
where the cardinality of the auxiliary variable  $U$ may be bounded
by $|U|\leq |\mathcal X||\mathcal Y||\mathcal {\hat X}_1||\mathcal
{\hat X}_2|+2$.
\end{theorem}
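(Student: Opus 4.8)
The plan is to prove Theorem~\ref{t_triangle} in two parts---achievability and a matching converse---reusing the machinery of Section~\ref{s_proof}, with the auxiliary $U$ playing the role of the common description that must reach User~2. The guiding picture is that $U$ is delivered to User~2 \emph{in full} through the cascade link $R_2$, whereas the encoder exploits the side information $Y$ at User~1 to ship $U$ over the cheaper (binned) link $R_1$; the direct link $R_3$ then carries only an incremental, Wyner--Ziv-style refinement $\hat X_2$ that User~2 decodes using the already-recovered $U$ as common information.

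\textbf{Achievability.} I would fix a test channel $P(x,y)P(\hat x_1,\hat x_2,u\mid x,y)$ meeting the distortion constraints and build a layered code. Generate $\approx 2^{n(I(X,Y;U)+\epsilon)}$ codewords $u^n\sim\prod P_U$ and randomly bin them into $\approx 2^{n(I(X;U\mid Y)+2\epsilon)}$ bins. For each $u^n$ superimpose a $\hat X_1$-codebook of $\approx 2^{n(I(X;\hat X_1\mid U,Y)+\epsilon)}$ sequences drawn from $\prod P_{\hat X_1\mid U,Y}$ and a $\hat X_2$-codebook of $\approx 2^{n(I(X;\hat X_2\mid U)+\epsilon)}$ sequences drawn from $\prod P_{\hat X_2\mid U}$. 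The encoder covers $(x^n,y^n)$ with a jointly typical $u^n$ and sends on the $R_1$ link its \emph{bin index} together with the index of a jointly typical $\hat x_1^n$, for total rate $I(X;U\mid Y)+I(X;\hat X_1\mid U,Y)=I(X;\hat X_1,U\mid Y)$; simultaneously it sends the index of a jointly typical $\hat x_2^n$ on the direct link at rate $I(X;\hat X_2\mid U)$. Using $y^n$, User~1 resolves the unique $u^n$ inside the received bin (each bin holds about $2^{nI(Y;U)}$ codewords), reconstructs $\hat x_1^n$, and forwards the \emph{full} index of $u^n$ to User~2 at rate $I(X,Y;U)$; User~2 recovers $u^n$ and then $\hat x_2^n$. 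Standard covering/packing and the typical-average lemma give the distortion guarantees, so (\ref{e_R1_tri})--(\ref{e_R3_tri}) are achievable.

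\textbf{Converse.} I would set $T_1=f_1(X^n,Y^n)$, $T_2=f_2(T_1,Y^n)$, $T_3=f_3(X^n,Y^n)$ and, crucially, identify the single auxiliary
\[
U_i=(T_2,X^{i-1},Y^{i-1}),
\]
which must serve all three bounds at once. The bounds on $R_2$ and $R_1$ are essentially the cascade computations (\ref{e_conv1})--(\ref{e_conv2}): $nR_2\ge I(X^n,Y^n;T_2)=\sum_i I(X_i,Y_i;U_i)$ gives (\ref{e_R2_tri}), and $nR_1\ge H(T_1\mid Y^n)=H(T_1,T_2\mid Y^n)\ge I(X^n;T_1,T_2\mid Y^n)$, after the usual single-letterization and using that $\hat X_{1,i}$ is a function of $(T_1,Y^n)$, yields $\sum_i I(X_i;\hat X_{1,i},U_i\mid Y_i)$ and hence (\ref{e_R1_tri}). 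The genuinely new bound is the one on $R_3$: I would write $nR_3\ge H(T_3)\ge I(X^n,Y^n;T_3\mid T_2)$, expand by the chain rule into $\sum_i I(X_i,Y_i;T_3\mid U_i)$, drop $Y_i$ to get $\sum_i I(X_i;T_3\mid U_i)$, and finally invoke data processing along the (deterministic) relation $X_i-(U_i,T_3)-\hat X_{2,i}$---valid because $U_i$ contains $T_2$ and $\hat X_{2,i}=g_{2,i}(T_2,T_3)$---to reach $\sum_i I(X_i;\hat X_{2,i}\mid U_i)$, giving (\ref{e_R3_tri}).

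The main obstacle is precisely this last step: one must check that \emph{one and the same} auxiliary $U_i=(T_2,X^{i-1},Y^{i-1})$ simultaneously single-letterizes all three rate constraints. The $R_2$ and $R_1$ bounds force $Y^{i-1}$ into $U_i$, whereas a naive treatment of $R_3$ that conditions only on $(T_2,X^{i-1})$ cannot reach $I(X_i;\hat X_{2,i}\mid U_i)$; the remedy is to insert $Y^n$ into $I(X^n,Y^n;T_3\mid T_2)$, use the per-letter chain rule in the super-symbol $(X_i,Y_i)$ to expose $U_i$, and only then discard $Y_i$ and pass from $T_3$ to $\hat X_{2,i}$. I would finish in the standard way: introduce $Q\sim\mathrm{Unif}\{1,\dots,n\}$ independent of the sources, set $U=(Q,U_Q)$, $X=X_Q$, $Y=Y_Q$, $\hat X_j=\hat X_{j,Q}$, so that (\ref{e_R1_tri})--(\ref{e_def_dist_tri}) hold for the induced single-letter law. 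The cardinality bound $|U|\le|\mathcal X||\mathcal Y||\hat{\mathcal X}_1||\hat{\mathcal X}_2|+2$ then follows from the support-lemma/Carath\'eodory argument, choosing $U$ to preserve the joint law of $(X,Y,\hat X_1,\hat X_2)$---hence both distortions---together with the information functionals appearing in (\ref{e_R1_tri})--(\ref{e_R3_tri}).
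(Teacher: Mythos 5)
Your proposal is correct and follows essentially the same route as the paper: the converse uses the identical auxiliary $U_i=(T_2,X^{i-1},Y^{i-1})$, the same three chains of inequalities (including working in the super-symbol $(X_i,Y_i)$ for the $R_3$ bound before discarding $Y_i$ and passing to $\hat X_{2,i}$), and the same time-sharing and support-lemma finish. Your achievability spells out directly the layered covering/binning code that the paper obtains by invoking its cascade theorem together with the reduction to the Markov form $P(\hat x_2|x,u)$ in Lemma~\ref{l_distribustion_tri}.
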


Lemma \ref{l_distribustion_tri} below shows that one can restrict the joint distribution $P(x,y) P(\hat x_1,\hat x_2, u|x,y)$ to $P(x,y) P(\hat x_1,
u|x,y)P(\hat x_2|x,u)$ without affecting the region.

{\it Proof of Converse Part of Theorem \ref{t_triangle}:} Assume that we have an
$(n,2^{nR_1},2^{nR_2},2^{nR_3},D_1,D_2)$ code. We will show the existence of a joint distribution
$P_{X,Y,U,\hat X_1,\hat X_2}$ that
satisfies~(\ref{e_R1_tri})-(\ref{e_def_dist_tri}). Denote
$T_1=f_1(X^n,Y^n)\in\{1,...,2^{nR_1}\}$, and
$T_2=f_2(T_1,Y^n)\in\{1,...,2^{nR_2}\}$, and
$T_3=f_3(X^n,Y^n)\in\{1,...,2^{nR_3}\}$. Then,
\begin{eqnarray}\label{e_conv1_tri}
nR_1 &\stackrel{}{\geq}&H(T_1) \nonumber \\
&\stackrel{}{\geq}&H(T_1|Y^n) \nonumber \\
&\stackrel{(a)}{=}&H(T_1,T_2|Y^n) \nonumber \\
&\stackrel{}{\geq}&I(X^n;T_1,T_2|Y^n) \nonumber \\
&\stackrel{}{=}&\sum_{i=1}^n H(X_i|Y_i)- H(X_i|Y^n,T_1,T_2,X^{i-1})\nonumber \\
&\stackrel{(b)}{=}&\sum_{i=1}^n H(X_i|Y_i)- H(X_i|Y^n,T_1,T_2,X^{i-1},\hat X_{1,i},U_{i})\nonumber \\
&\stackrel{}{\geq}&\sum_{i=1}^n H(X_i|Y_i)- H(X_i|Y_i,\hat X_{1,i},U_{i})\nonumber \\
&\stackrel{}{=}&\sum_{i=1}^n I(X_i;\hat X_{1,i},U_{i}|Y_i),
\end{eqnarray}
where equality (a) follows from the fact that $T_2$ is a
deterministic function of $T_1$ and $Y^n$, and, similarly, equality
(b) follows from the fact that $\hat X_{1,i}$ is a deterministic
function of $(T_1,Y^n)$ and  from defining $\hat U_{i}\triangleq
(T_2,X^{i-1},Y^{i-1})$. Now, consider
\begin{eqnarray}\label{e_conv2_tri}
nR_2 &\stackrel{}{\geq}&H(T_2) \nonumber \\
&\stackrel{}{\geq}&I(X^n,Y^n;T_2) \nonumber \\
&\stackrel{}{=}&\sum_{i=1}^n
H(X_i,Y_i)-H(X_i,Y_i|T_2,X^{i-1},Y^{i-1})\nonumber \\
&\stackrel{(a)}{=}&\sum_{i=1}^n
H(X_i,Y_i)-H(X_i,Y_i|U_i)\nonumber \\
&\stackrel{}{\geq}&\sum_{i=1}^n I(X,Y;U_i),
\end{eqnarray}
where equality (a) follows from definition of $U_i=(
T_2,X^{i-1},Y^{i-1})$. In addition, consider
\begin{eqnarray}\label{e_conv3_tri}
nR_3 &\stackrel{}{\geq}&H(T_3) \nonumber \\
&\stackrel{}{\geq}&H(T_3|T_2) \nonumber \\
&\stackrel{}{\geq}&I(X^n,Y^n;T_3|T_2) \nonumber \\
&\stackrel{}{=}&\sum_{i=1}^n
H(X_i,Y_i|T_2,X^{i-1},Y^{i-1})-H(X_i,Y_i|T_2,T_3,X^{i-1},Y^{i-1})\nonumber \\
&\stackrel{(a)}{=}&\sum_{i=1}^n
H(X_i,Y_i|U_i)-H(X_i,Y_i|\hat X_{2,i},U_i)\nonumber \\
&\stackrel{}{\geq}&\sum_{i=1}^n I(X,Y;\hat X_{2,i}|U_i)\nonumber \\
&\stackrel{}{\geq}&\sum_{i=1}^n I(X;\hat X_{2,i}|U_i),
\end{eqnarray}
where equality (a) follows from the definition of $U_i=
(T_2,X^{i-1},Y^{i-1})$ and the fact that $\hat X_{2,i}$ is a
deterministic function of $(T_2,T_3)$.

The proof is concluded in the standard way by letting $Q$ be a
random variable independent of $X^n,Y^n$,  uniformly distributed
over the set $\{1,2,3,..,n\}$, and considering the joint
distribution of $X_Q, Y_Q, U_Q, \hat X_{1,Q}, \hat X_{2,Q}$. For
this joint distribution, Inequalities (\ref{e_conv1_tri}),
(\ref{e_conv2_tri}), (\ref{e_conv3_tri})  imply that
(\ref{e_R1_tri}), (\ref{e_R2_tri}) and (\ref{e_R3_tri}) hold,
respectively, and the fact that the code we have fixed satisfies the distortion constraints implies that
(\ref{e_def_dist_tri}) holds.

To prove the cardinality bound of $U$, we invoke the support
lemma~\cite[pp. 310]{Csiszar81}.
 The external random
variable $U$ must have $|\mathcal X||\mathcal Y||\mathcal {\hat
X}_1||\mathcal {\hat X}_2|-1$ letters to preserve $P(x,y,\hat
x_1,\hat x_2)$
 plus three more to preserve the expressions
$I(X;\hat X_1,U|Y)$, $I(Y,X;U)$, $I(X;\hat X_2|U)$. Note that
preserving $P(x,y,\hat x_1,\hat x_2)$ implies that
$\mathbb{E}d_i(X,\hat X_i)\leq D_i$ for $i=1,2$ is also preserved.
 \hfill \QED

For the achievability part, we first establish the following:
%
\begin{lemma}[Optimality of $\hat X_2-(X,U)-(\hat
X_1,Y)$]\label{l_distribustion_tri}
The rate region $\mathcal R_\Delta(D_1,D_2)$, which is defined by
(\ref{e_R1_tri})-(\ref{e_def_dist_tri}), does not decrease by
restricting the joint distribution to the form $P(x,y) P(\hat x_1,
u|x,y)P(\hat x_2|x,u)$.
\end{lemma}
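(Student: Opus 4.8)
The plan is to prove the stronger statement that every rate triple in $\mathcal R_\Delta(D_1,D_2)$ is already attained by a distribution of the restricted form $P(x,y)P(\hat x_1,u|x,y)P(\hat x_2|x,u)$. Since the restricted family is a subset of all admissible distributions, the restricted region is automatically contained in the full region; establishing the reverse inclusion then gives equality, and in particular shows that restricting the distribution does not shrink the region. First I would fix an arbitrary admissible joint distribution $P_{X,Y,U,\hat X_1,\hat X_2}$ together with a triple $(R_1,R_2,R_3)$ satisfying (\ref{e_R1_tri})--(\ref{e_def_dist_tri}) under $P$, and construct from it a new distribution $\tilde P$ that keeps the marginal $P(x,y,\hat x_1,u)$ intact and replaces the conditional law of $\hat X_2$ by the marginal conditional $P(\hat x_2|x,u)$ computed under $P$. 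Concretely, set $\tilde P(x,y,\hat x_1,u,\hat x_2):=P(x,y,\hat x_1,u)\,P(\hat x_2|x,u)$, which by construction factors as $P(x,y)P(\hat x_1,u|x,y)P(\hat x_2|x,u)$ and hence obeys the Markov chain $\hat X_2-(X,U)-(Y,\hat X_1)$.

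The core of the argument is the bookkeeping that nothing relevant changes under this replacement. I would record two facts that are immediate from the definition of $\tilde P$: its $(X,Y,\hat X_1,U)$-marginal coincides with that of $P$, and its $(X,U,\hat X_2)$-marginal equals $P(x,u)P(\hat x_2|x,u)=P(x,u,\hat x_2)$, which is again the $(X,U,\hat X_2)$-marginal of $P$. From the first fact, the right-hand sides of (\ref{e_R1_tri}) and (\ref{e_R2_tri}), namely $I(X;\hat X_1,U|Y)$ and $I(Y,X;U)$, are unchanged, since each is a functional of $P(x,y,\hat x_1,u)$ alone; likewise $\mathbb E d_1(X,\hat X_1)$ depends only on $P(x,\hat x_1)$ and is preserved. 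From the second fact, $I(X;\hat X_2|U)$ in (\ref{e_R3_tri}) is a functional of $P(x,u,\hat x_2)$ only and is therefore also preserved, and marginalizing out $U$ gives $\tilde P(x,\hat x_2)=P(x,\hat x_2)$, so $\mathbb E d_2(X,\hat X_2)$ is preserved as well. Consequently $\tilde P$ satisfies (\ref{e_R1_tri})--(\ref{e_def_dist_tri}) with the identical values, whence $(R_1,R_2,R_3)\in\mathcal R_\Delta(D_1,D_2)$ is achieved within the restricted family, completing the inclusion.

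I do not anticipate a genuine obstacle: the construction is a single-step ``forgetting'' of the dependence of $\hat X_2$ on $(Y,\hat X_1)$, and every quantity appearing in the region is a functional of either the $(X,Y,\hat X_1,U)$-marginal or the $(X,U,\hat X_2)$-marginal, both of which are invariant under the replacement. The one point that warrants a line of care is verifying that substituting $P(\hat x_2|x,u)$ for the original conditional leaves the joint $(X,U,\hat X_2)$-law — and hence $I(X;\hat X_2|U)$ — exactly unchanged rather than merely bounded; here I would make explicit that the marginal conditional under $P$ is precisely the object that reproduces $P(x,u,\hat x_2)$. It is worth noting that this yields equality of all the information and distortion functionals, not merely monotonicity, which is slightly stronger than the statement required by the lemma.
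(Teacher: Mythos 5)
Your proposal is correct and follows essentially the same route as the paper: replace the conditional law of $\hat X_2$ by $P(\hat x_2|x,u)$, observe that the $(X,Y,\hat X_1,U)$- and $(X,U,\hat X_2)$-marginals are unchanged, and note that every rate and distortion functional in (\ref{e_R1_tri})--(\ref{e_def_dist_tri}) depends only on these two marginals. Your write-up merely makes the marginal-preservation bookkeeping more explicit than the paper does.
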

\begin{proof}
For a fixed $(D_1,D_2)$, let the rate-triple $(R_1,R_2,R_3)\in
\mathcal R_\Delta(D_1,D_2).$ Then there exists a joint distribution
\begin{equation}\label{e_p1}
P(x,y,u,\hat x_1,\hat x_2) =P(x,y) P(\hat x_1,\hat x_2, u|x,y),
\end{equation}
 for
which (\ref{e_R1_tri})-(\ref{e_def_dist_tri}) hold. Let $P(\hat x_1,
u|x,y)$ and $P(\hat x_2|x,u)$ be the conditional distribution
induced by $P(x,y,u,\hat x_1,\hat x_2)$. We now claim that
(\ref{e_R1_tri})-(\ref{e_def_dist_tri}) are satisfied under the
joint distribution
\begin{equation}\label{e_p2}
\tilde P(x,y,u,\hat x_1,\hat x_2) =P(x,y)P(\hat x_1, u|x,y)P(\hat
x_2|x,u).\end{equation}
 This is
true, since the expressions (\ref{e_R1_tri})-(\ref{e_def_dist_tri})
depend on $P(x,y,u,\hat x_1,\hat x_2)$ only through the marginals $P(x,y,u,\hat x_1)$ and  $P(x,u,\hat
x_2)$. Now notice that those marginals are the same whether the
joint distribution is $P(x,y,u,\hat x_1,\hat x_2)$ or $\tilde
P(x,y,u,\hat x_1,\hat x_2)$.
\end{proof}

{\it Sketch of proof of Achievability part of Theorem \ref{t_triangle}:} The achievability proof follows directly from
the achievability of cascade source coding as given in Theorem
\ref{t_cascade}. First, we fix a joint distribution of the form
$P(x,y)P(\hat x_1, u|x,y)P(\hat x_2|x,u,y)$ such that
(\ref{e_R1_tri})-(\ref{e_def_dist_tri}) hold. Since $R_1> I(X;\hat
X_1,U|Y)$ and $R_2> I(Y,X;U)$, then according to Theorem
\ref{t_cascade}, we can generate $(\hat X_1^n, U^n)$ that with high
probability would be jointly typical with $(X^n,Y^n)$ according to
the distribution $P(x,y)P(\hat x_1, u|x,y)$. Now, since $U^n$ is
known both to the encoder and to User 2, we need a rate $R_3>I(X;\hat X_2|U)$
to generate $\hat X_2^n$ such that with high
probability it is jointly typical with $X^n, U^n$. Finally,
because of the Markov relation $\hat X_2-(X,U)-(\hat X_1,Y),$ we can
invoke the Markov lemma, and conclude that the sequences $X^n,Y^n,
\hat X_1^n, , \hat X_2^n, U^n$ are jointly typical and therefore
the distortion criteria are satisfied.
\hfill \QED

%

\subsection{The Gaussian triangular case}
We now evaluate the rate region of the triangular network
depicted in Fig. \ref{f_triangle} for the quadratic Gaussian case,
i.e., $X,Y$ are jointly Gaussian and $d_1(x,\hat x_1)=(x-\hat x_1)^2$,
$d_2(x,\hat x_2)=(x-\hat x_2)^2$. We first show that it suffices to
consider only Gaussian joint distributions for exhausting the region,
and then we show that by a small change in the Gaussian cascade
region we obtain the Gaussian triangular region.
\begin{theorem}[Optimality of jointly Gaussian distributions]\label{t_gaussian_tri}
For the quadratic Gaussian triangular rate-distortion problem with
side information known to the source-encoder and to User 1, it
suffices to consider only jointly Gaussian distributions
$P_{X,Y,U,\hat X_1, \hat X_2}$ in order to exhaust the rate region
$\mathcal R_{\Delta}(D_1,D_2)$ given in
(\ref{e_R1_tri})-(\ref{e_def_dist_tri}).
\end{theorem}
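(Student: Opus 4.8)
The plan is to mirror the structure of the proof of Lemma~\ref{l_gauusian}, adapting it to the three rate constraints \eqref{e_R1_tri}--\eqref{e_R3_tri} that define $\mathcal R_\Delta(D_1,D_2)$. I would fix a point $(R_1,R_2,R_3,D_1,D_2)$ in the region, and let $P_{X,Y,U,\hat X_1,\hat X_2}$ be a joint distribution achieving it. By Lemma~\ref{l_distribustion_tri} I may assume this distribution factors as $P(x,y)P(\hat x_1,u|x,y)P(\hat x_2|x,u)$, which guarantees the required Markov relation $\hat X_2-(X,U)-(\hat X_1,Y)$ in the end. Let $K$ be the covariance matrix induced by $P_{X,Y,U,\hat X_1,\hat X_2}$ and let $\tilde P$ be the zero-mean jointly Gaussian distribution with the same covariance $K$. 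I then show that each of \eqref{e_R1_tri}--\eqref{e_def_dist_tri} continues to hold under $\tilde P$. The distortion constraints \eqref{e_def_dist_tri} are immediate, since they depend on $(X,\hat X_i)$ only through the relevant second moments, which are preserved by construction.

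The key work is to bound the three mutual-information terms. First I would treat $R_1\geq I(X;\hat X_1,U|Y)=h(X|Y)-h(X|\hat X_1,U,Y)$ exactly as in \eqref{e_gauss}: replace $X$ in the conditional entropy by the error $X-(\alpha_1\hat X_1+\alpha_2 U+\alpha_3 Y)$ of the linear least-squares estimator of $X$ from $(\hat X_1,U,Y)$, drop the conditioning (conditioning reduces entropy), and then upper bound the resulting unconditional differential entropy by the Gaussian entropy with the same variance. Since both the estimator coefficients and the error variance are functions of $K$ alone, and since for the Gaussian $\tilde P$ the estimation error is independent of the observations, this chain yields $I(X;\hat X_1,U|Y)\geq I_{\tilde P}(X;\hat X_1,U|Y)$. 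For $R_2\geq I(Y,X;U)$ I would split it as $I(Y;U)+I(X;U|Y)$ and apply the same argument term by term, exactly paralleling the treatment of $R_2$ in the cascade lemma. For $R_3\geq I(X;\hat X_2|U)=h(X|U)-h(X|\hat X_2,U)$ I would again introduce the linear estimator of $X$ from $(\hat X_2,U)$ and run the identical entropy-maximization bound, but here I must be careful to keep the conditioning on $U$ in the first term; rewriting as $h(X|U)-h(X-\text{est}|\hat X_2,U)\geq h(X|U)-h_{\tilde P}(X-\text{est})$ works, and I must verify that the leading term $h(X|U)$ satisfies $h(X|U)\le h_{\tilde P}(X|U)$ in the right direction or, more cleanly, keep $h(X|U)$ fixed only if it too is replaced correctly --- this asymmetry is the delicate point.

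The main obstacle is precisely this last point: unlike the cascade case, the term $R_3$ has a nontrivial conditioning variable $U$ appearing in \emph{both} entropy terms, and $U$ is an auxiliary variable whose conditional distribution given $X$ is not Gaussian under the original $P$. The clean trick used for $R_1$ and $R_2$ --- bounding $h(X|Y)$ from below by leaving it untouched because $Y$ is genuinely Gaussian --- does not directly apply, because $h(X|U)$ under $P$ need not equal $h_{\tilde P}(X|U)$, and conditioning on a non-Gaussian $U$ can make the real entropy larger or smaller than its Gaussian counterpart. I expect the resolution is to write $I(X;\hat X_2|U)=h(X|U)-h(X|\hat X_2,U)$ and handle the difference as a single object: introduce the joint linear estimator and use that the error $X-(\beta_1\hat X_2+\beta_2 U)$ has a variance determined by $K$, so that $h(X|\hat X_2,U)\le h(X|U)-\tfrac12\log\!\big(\mathrm{Var}(X|U)/\mathrm{Var}_{\mathrm{err}}\big)$ reduces the whole expression to a ratio of conditional variances that depends only on $K$, which the Gaussian $\tilde P$ attains with equality. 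Once all three inequalities are in place, the point $(R_1,R_2,R_3,D_1,D_2)$ is achieved by the Gaussian $\tilde P$, establishing that jointly Gaussian distributions exhaust $\mathcal R_\Delta(D_1,D_2)$.
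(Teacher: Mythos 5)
You have correctly isolated the one genuinely new difficulty relative to Lemma \ref{l_gauusian}: the constraint $R_3\geq I(X;\hat X_2|U)$, in which the conditioning variable $U$ is an auxiliary (hence in general non-Gaussian) random variable appearing in \emph{both} entropy terms. Unfortunately, the resolution you sketch for this term does not work. Your inequality $h(X|\hat X_2,U)\le h(X|U)-\tfrac12\log\bigl(\mathrm{Var}(X|U)/\mathrm{Var}_{\mathrm{err}}\bigr)$, with both variances read off from the covariance matrix $K$, is equivalent to the lower bound $h(X|U)\ge \tfrac12\log\bigl(2\pi e\,\sigma^2_{\mathrm{lin}}(X|U)\bigr)$, where $\sigma^2_{\mathrm{lin}}(X|U)$ is the linear MMSE variance; but the estimator/max-entropy argument only yields the \emph{opposite} direction, $h(X|U)\le h_{\tilde P}(X|U)$. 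The inequality you need is false in general: if, say, $U$ is an invertible nonlinear function of $X$, then $I_P(X;\hat X_2|U)=0$ while $\sigma^2_{\mathrm{lin}}(X|U)>0$, so the Gaussian surrogate $\tilde P$ with the same covariance can have $I_{\tilde P}(X;\hat X_2|U)>0=I_P(X;\hat X_2|U)$. Thus Gaussianizing at fixed covariance can strictly \emph{increase} the right-hand side of (\ref{e_R3_tri}), and the fixed rate triple may fall outside the Gaussian-restricted region. The ``handle the difference as a single object'' step is therefore a genuine gap, not merely a delicate point.

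The paper closes this gap with two ingredients absent from your proposal. First, by Lemma \ref{l_pareto} it suffices to consider Pareto-optimal rate triples, for which (\ref{e_R1_tri})--(\ref{e_R3_tri}) hold with equality. Second, using the Markov structure $\hat X_2-(X,U)-(\hat X_1,Y)$ supplied by Lemma \ref{l_distribustion_tri}, the second and third constraints are combined into the single equality $R_2+R_3=I(Y,X;\hat X_2,U)$, whose leading (positive) entropy term is $h(X,Y)$ --- the entropy of the genuinely Gaussian source pair, which is invariant under $P\mapsto\tilde P$. Each of the three resulting quantities $I(X;\hat X_1,U|Y)$, $I(Y,X;U)$ and $I(Y,X;\hat X_2,U)$ then Gaussianizes by exactly the chain of inequalities in (\ref{e_gauss}), because the problematic conditioning on $U$ never occurs in a leading entropy term. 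Your treatment of $R_1$, $R_2$ and of the distortion constraints is correct and coincides with the paper's; to complete the argument you should replace your $R_3$ step by this sum reformulation together with the Pareto-frontier reduction.
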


Before proving the theorem, let us introduce the Pareto frontier
\cite{BoydOptimizationBook04} of a region and show that if two
rate-regions have the same Pareto frontier then they are identical.
The {\it Pareto frontier} of a region $\mathcal R$, which we denote by $
Par({\mathcal R})$, is the set of all points for which there is no
strictly better point in the region. Formally,
\begin{equation}
Par(\mathcal R)=\{R^n \in \mathcal R: \nexists \tilde R^n \in
\mathcal R \text{ s.t. } \tilde R^n \prec R^n\},
\end{equation}
where $ \tilde R^n \prec R^n $ denotes that $\tilde R_i\leq R_i$ for
all $1\leq i\leq n$ and for some $1\leq i\leq n$,  $\tilde R_i<
R_i$.

\begin{lemma}\label{l_pareto}
If two rate-regions, $\mathcal R_1$ and $\mathcal R_2$, have the same
Pareto frontier, then they are identical.
\end{lemma}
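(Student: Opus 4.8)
The plan is to use two structural features shared by all the rate regions in this paper: each such region $\mathcal R$ is \emph{closed} (by Definition~\ref{def_the achievable_region}, which takes a closure) and \emph{upward closed}, meaning that if $R^n\in\mathcal R$ and $R_i\le \tilde R_i$ for every coordinate $i$, then $\tilde R^n\in\mathcal R$ as well; the latter holds because any code achieving a given rate vector trivially achieves every coordinatewise larger one. In addition $\mathcal R\subseteq[0,\infty)^n$. (Note that for arbitrary sets the statement is false, so these features are doing the real work.) With them in hand it suffices, by symmetry, to prove $\mathcal R_1\subseteq\mathcal R_2$. So I fix $R^n\in\mathcal R_1$ and aim to produce a Pareto point $P^n\in Par(\mathcal R_1)$ that is dominated coordinatewise by $R^n$. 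Since $Par(\mathcal R_1)=Par(\mathcal R_2)$, this same $P^n$ then lies in $Par(\mathcal R_2)\subseteq\mathcal R_2$, and the upward closedness of $\mathcal R_2$ together with $P_i\le R_i$ for all $i$ forces $R^n\in\mathcal R_2$, as desired.

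The heart of the proof is therefore the claim that every $R^n\in\mathcal R$ dominates some point of the Pareto frontier. To see this, I consider the ``lower box''
\begin{equation}
S=\{\,\tilde R^n\in\mathcal R:\ \tilde R_i\le R_i\ \text{for all } i\,\}.
\end{equation}
This set is nonempty (it contains $R^n$), closed (it is the intersection of the closed set $\mathcal R$ with a closed box), and bounded (it lies in $\prod_{i=1}^n[0,R_i]$ because $\mathcal R\subseteq[0,\infty)^n$); hence $S$ is compact. Let $P^n$ minimize the continuous linear objective $\sum_{i=1}^n \tilde R_i$ over $S$, which exists by compactness. I claim $P^n\in Par(\mathcal R)$. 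Indeed, if some $\tilde R^n\in\mathcal R$ satisfied $\tilde R^n\prec P^n$, then $\tilde R_i\le P_i\le R_i$ for all $i$ would place $\tilde R^n\in S$, while $\sum_{i=1}^n\tilde R_i<\sum_{i=1}^n P_i$ (strict, because $\tilde R^n\prec P^n$ is a strict inequality in at least one coordinate) would contradict the minimality of $P^n$. Thus no point of $\mathcal R$ strictly dominates $P^n$, so $P^n\in Par(\mathcal R)$, and by construction $P_i\le R_i$ for all $i$.

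Combining the two paragraphs completes the argument. The step requiring the most care — and the only place topology enters — is this existence of a dominated Pareto point: it relies on closedness of the region and on nonnegativity of rates to secure compactness of $S$, after which minimizing the sum of the coordinates pins down a frontier point below $R^n$. Everything else is bookkeeping with the upward-closedness property, which is precisely the feature that makes a rate region equal to the ``up-set'' generated by its Pareto frontier, and hence recoverable from that frontier alone.
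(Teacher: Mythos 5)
Your proof is correct and follows the same skeleton as the paper's: produce a Pareto point of $\mathcal R_1$ that is coordinatewise dominated by the given $R^n$, transfer it to $Par(\mathcal R_2)\subseteq\mathcal R_2$, and use upward closedness to recover $R^n\in\mathcal R_2$. The only difference is one of completeness: the paper merely asserts the existence of such a dominated Pareto point (and leaves closedness, nonnegativity, and upward closedness implicit), whereas you justify it via compactness of the lower box and minimization of the coordinate sum, and your non-strict domination $P_i\le R_i$ also cleanly handles the case where $R^n$ is itself Pareto-optimal, which the paper's use of the strict relation $R_p\prec R$ glosses over.
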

\begin{proof}
Let us show that the assumptions  $R\in \mathcal R_1$ and $R \notin
\mathcal R_2$ lead to a contradiction. If $R\in \mathcal R_1$, then
there exists a point $R_p\in Par(\mathcal R_1)$ that satisfies
$R_p\prec R$. Since $R_p\in Par(\mathcal R_1)$, it follows that
$R_p\in Par(\mathcal R_2)$. Finally, since  $R_p\in \mathcal R_2$
and $R_p\prec R$, then $R\in \mathcal R_2,$  which contradicts the
assumption.
\end{proof}

{ \it Proof of Theorem \ref{t_gaussian_tri}:} As a result of Lemma \ref{l_pareto},
we conclude that it suffices to prove  Theorem \ref{t_gaussian_tri}
only for the points in the Pareto frontier. In addition, we notice
that  points that are Pareto optimal satisfy
(\ref{e_R1_tri})-(\ref{e_R3_tri}) with equality, which may be also
written as
\begin{eqnarray}
R_1&=& I(X;\hat X_1,U|Y),\label{e_R1_tri_eq}\\
R_2&=& I(Y,X;U), \label{e_R2_tri_eq}  \\
R_3+R_2&=& I(Y, X;\hat X_2,U).\label{e_R3_tri_eq}
\end{eqnarray}
Finally, assuming without loss of generality $U$ is real-valued and
using similar arguments as in Lemma \ref{l_gauusian}, we conclude
that for any joint distribution $ P_{X,Y,\hat X_1, \hat X_2, U}$
there exists a Gaussian joint distribution, $\tilde P_{X,Y,\hat X_1,
\hat X_2, U}$, with the same covariance matrix as $ P_{X,Y,\hat X_1,
\hat X_2, U}$,  for which the induced right hand sides of
(\ref{e_R1_tri_eq})-(\ref{e_R3_tri_eq}) do not increase.\hfill \QED

Now, with a small change in the solution to the Gaussian cascade, we
obtain the triangular Gaussian region. The proof is deferred to
Appendix \ref{s_app_tri_gauss}.
 \begin{theorem}[Triangle Gaussian case] \label{t_tri_gaussian}
The rate region of the triangular source coding with side
information at the first two nodes, where the source $X$ and the
side information $Y=X+Z$  are jointly Gaussian distributed, where
$X$ and $Z$ are mutually independent, and the distortion is
quadratic, is given by Eq. (\ref{e_gaussian_R1})-(\ref{e_gaussian}),
where $D_2$ is replaced by $D_2 2^{2R_3}$ i.e.,
$R_1^{triangle}(D_1,D_2, R_2,R_3)=R_1^{cascade}(D_1,D_2 2^{2R_3},
R_2)$.
\end{theorem}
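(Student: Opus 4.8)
The plan is to reduce the Gaussian triangular problem to the already-solved Gaussian cascade problem of Theorem \ref{t_cascade_gaussian}, by identifying the auxiliary variable $U$ with the cascade's forwarded reconstruction $\hat X_2$ and absorbing the effect of the direct link of rate $R_3$ into a rescaling of $D_2$. First I would use Theorem \ref{t_gaussian_tri} together with Lemma \ref{l_pareto} to restrict attention to the Pareto frontier and to jointly Gaussian $P_{X,Y,U,\hat X_1,\hat X_2}$, and then Lemma \ref{l_distribustion_tri} to impose the Markov structure $\hat X_2-(X,U)-(\hat X_1,Y)$, so that on the frontier the rates satisfy (\ref{e_R1_tri_eq})-(\ref{e_R3_tri_eq}). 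The key observation is that the two quantities governing the cascade path, $R_1=I(X;\hat X_1,U|Y)$ and $R_2=I(Y,X;U)$, are exactly the cascade expressions (\ref{e_R1})-(\ref{e_R2}) with the forwarded reconstruction renamed from $\hat X_2$ to $U$; hence $U$ is precisely the variable relayed through User 1, and the direct link only changes the fidelity that $U$ must support at User 2.

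The heart of the argument is a scalar Gaussian successive-refinement identity relating $R_3$ to the quality of $U$. For the jointly Gaussian variables under the above Markov structure I would write $R_3=I(X;\hat X_2|U)=\tfrac12\log(\sigma_{X|U}^2/\sigma_{X|\hat X_2,U}^2)$. Since $\sigma_{X|\hat X_2,U}^2\le\sigma_{X|\hat X_2}^2\le \mathbb{E}(X-\hat X_2)^2\le D_2$, this forces $\sigma_{X|U}^2\le D_2 2^{2R_3}$. Conversely, given any Gaussian $U$ with $\sigma_{X|U}^2\le D_2 2^{2R_3}$, describing the residual $X-\mathbb{E}[X|U]$ (which for jointly Gaussian variables is independent of $U$) at rate $R_3$ via the scalar Gaussian rate-distortion function yields a reconstruction $\hat X_2$ with $\mathbb{E}(X-\hat X_2)^2=\sigma_{X|U}^2 2^{-2R_3}\le D_2$, $I(X;\hat X_2|U)\le R_3$, and the required Markov structure. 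Thus the User-2 distortion requirement translates \emph{exactly} into the constraint $\sigma_{X|U}^2\le D_2 2^{2R_3}$ on the forwarded variable.

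It then remains to observe that the problem decouples: once $U$ is fixed, the Markov structure lets $\hat X_1$ and $\hat X_2$ be chosen conditionally and independently, so the refinement interacts with the $R_1$/$R_2$ trade-off only through $\sigma_{X|U}^2$. Consequently, minimizing $R_1=I(X;\hat X_1,U|Y)$ over admissible Gaussian $(U,\hat X_1)$ subject to $\mathbb{E}(X-\hat X_1)^2\le D_1$, $I(Y,X;U)\le R_2$, and $\sigma_{X|U}^2\le D_2 2^{2R_3}$ is term-by-term the same optimization that underlies Theorem \ref{t_cascade_gaussian}, after the identification $U\leftrightarrow\hat X_2$ and the replacement of the cascade User-2 distortion by $D_2 2^{2R_3}$. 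Invoking Theorem \ref{t_cascade_gaussian} then gives $R_1^{triangle}(D_1,D_2,R_2,R_3)=R_1^{cascade}(D_1,D_2 2^{2R_3},R_2)$, i.e. the formula (\ref{e_gaussian_R1})-(\ref{e_gaussian}) with $D_2$ replaced by $D_2 2^{2R_3}$.

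The hard part will be establishing the tightness of the successive-refinement step, namely that on the Pareto frontier $U$ achieves effective MMSE exactly $D_2 2^{2R_3}$ and that no rate is wasted; this rests on classical Gaussian successive refinability, where the residual is Gaussian and independent of $U$ so that the scalar rate-distortion bound is met with equality. A secondary subtlety requiring care is the passage from the distortion constraint $\mathbb{E}(X-\hat X_2)^2\le D_2$ to the MMSE constraint $\sigma_{X|U}^2\le D_2 2^{2R_3}$, which uses that the optimal Gaussian reconstruction is the conditional mean (so distortion equals MMSE) and that conditioning on $U$ in $\sigma_{X|\hat X_2,U}^2$ does not drop below the attainable $\sigma_{X|U}^2 2^{-2R_3}$. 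Once these Gaussian facts are secured, the rest is bookkeeping confirming that the two optimization problems coincide, and the explicit four-case expression is inherited verbatim from Theorem \ref{t_cascade_gaussian}.
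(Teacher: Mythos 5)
Your proposal is correct and follows essentially the same route as the paper's proof in Appendix \ref{s_app_tri_gauss}: both reduce the triangular problem to the cascade optimization of Theorem \ref{t_cascade_gaussian} by showing, via Gaussian successive refinability, that the rate-$R_3$ direct link together with $\mathbb{E}(X-\hat X_2)^2\le D_2$ is equivalent to the single constraint $\sigma^2_{X|U}\le D_2\,2^{2R_3}$ on the forwarded variable. The paper phrases this with an explicit test channel $W'=X+\eta W+Z'$ and the identity $\sigma^2_{X|W,W'}=\sigma^2_{X|W}2^{-2R_3}$, which is precisely your two-directional MMSE argument.
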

\section{Extensions \label{s_extensions}}
Here we present two further extensions. The first is obtained by
generalizing the triangular network results to  more users. The
second is obtained by considering a more general problem of
empirical coordination  rather than distortion criteria.

\subsection{Multiple Users}
\begin{figure}[h!]{
\psfrag{b1}[][][1]{$X$} \psfrag{box1}[][][0.8]{Encoder}
\psfrag{box3}[][][0.8]{User 2} \psfrag{box4}[][][0.8]{User $k$}
\psfrag{box5}[][][0.7]{User $k+1$} \psfrag{box6}[][][0.7]{User
$k+2$} \psfrag{box7}[][][0.7]{User $k+l$}

 \psfrag{a2}[][][0.9]{$R_2$}
 \psfrag{a1}[][][0.9]{$R_1$}
 \psfrag{a3}[][][0.9]{$R_3$}
 \psfrag{a4}[][][0.9]{$R_k$}
\psfrag{a5}[][][0.9]{$\;\;\;\; R_{k+1}$}
\psfrag{a6}[][][0.9]{$\;\;\;\; R_{k+2}$}
\psfrag{a8}[][][0.9]{$\;\;\;\;\;\;\; R_{k+l-1}$}
\psfrag{a7}[][][0.9]{$\;\;\;\;\; R_{k+l}$}

\psfrag{Y}[][][1]{$Y$}

\psfrag{X2}[][][1]{$\hat X_2$} \psfrag{X1}[][][1]{$\hat X_1$}
\psfrag{X6}[][][0.9]{$\hat X_k$} \psfrag{X7}[][][0.9]{$\hat
X_{k+1}$} \psfrag{X8}[][][0.9]{$\hat X_{k+2}$}
\psfrag{X9}[][][0.9]{$\hat X_{k+l}$}
\psfrag{c1}[][][0.9]{$R_{k+l+1}\;\;\;\;\;$}
 \psfrag{box2}[][][0.8]{User 1}
\centerline{\includegraphics[width=13cm]{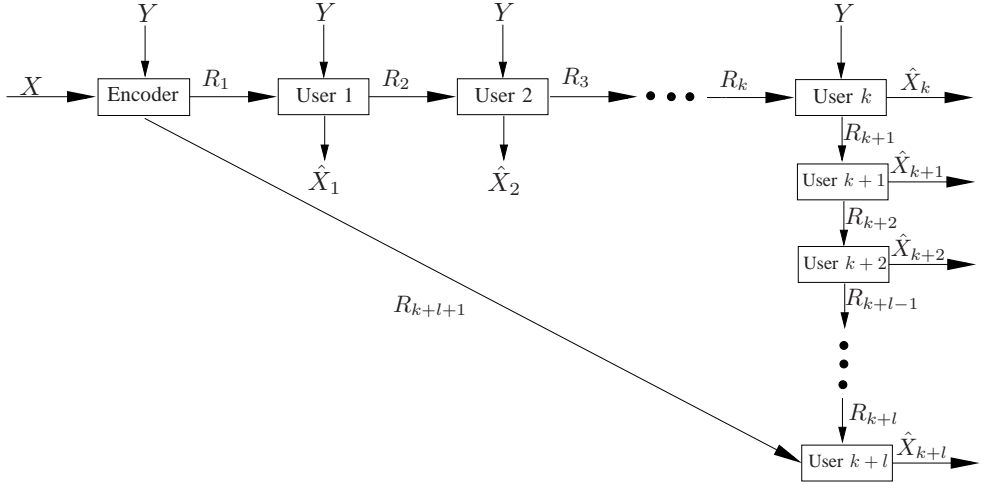}}
\caption{A triangular rate distortion problem with $k+l$ users,
where the side information $Y$ is known to the encoder and to Users
$1, 2, ..., k$, but not to Users $k+1, k+2, ..., k+l$.}
\label{f_multi} }
\end{figure}

The triangular problem depicted in Fig. \ref{f_triangle} can be
extended to $k+l$ users, where the side information is known to the
source encoder and to Users $1,2,...,k$, but is not known to Users
$k+1,k+2, ..., k+l$. This problem is depicted in Fig. \ref{f_multi},
and its region is given by the next theorem.

\begin{theorem}\label{t_multi}
The achievable region for the problem depicted in Fig.
\ref{f_multi} is given by the vector rates $(R_1,R_2, ..., R_{k+l+1})$ that
satisfy
\begin{eqnarray}
R_i&\geq& I(X;\hat X_i,\hat X_{i+1}, ..., \hat X_{k+l-1},U|Y), \ 1\leq i\leq k \nonumber \\
R_{j}&\geq& I(X;\hat X_{j}, ..., \hat X_{k+l-1},U), \ \ \ \ \ \ k+1\leq j\leq k+l \nonumber \\
R_{k+l+1}&\geq& I(X;\hat X_{k+l}|U), \nonumber \\
\end{eqnarray}
for some distribution $P(x,y) P(\hat x_1,\hat x_2,..., \hat x_k, u|x,y)$ for which
\begin{eqnarray}
\mathbb{E}d_i(X,\hat X_i)&\leq& D_i,\  \ 1\leq i\leq k+l .
\end{eqnarray}
where the cardinality of the auxiliary variable  $U$ may be bounded
by $|U|\leq |\mathcal X||\mathcal Y||\mathcal {\hat X}_1||\mathcal
{\hat X}_2|...|\mathcal {\hat X}_{k+l}|+k+l$.
\end{theorem}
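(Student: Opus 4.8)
I would prove Theorem~\ref{t_multi} by lifting the converse and achievability of Theorems~\ref{t_cascade} and~\ref{t_triangle} to the $(k+l)$--user network, with the cascade links $R_1,\ldots,R_{k+l}$ and the direct link $R_{k+l+1}$ to User $k+l$ playing the roles of $(R_1,R_2)$ and $R_3$. Write $T_m$ for the message on the link of rate $R_m$ into User $m$, so that $T_1=f_1(X^n,Y^n)$, $T_m=f_m(T_{m-1},Y^n)$ for $2\le m\le k+1$ (the senders, Users $1,\ldots,k$, observe $Y$), $T_m=f_m(T_{m-1})$ for $k+2\le m\le k+l$, and $T_{k+l+1}=f_{k+l+1}(X^n,Y^n)$. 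The crux of the converse is the single auxiliary $U_i\triangleq(T_{k+l},X^{i-1},Y^{i-1})$, \emph{anchored to the most downstream cascade message} $T_{k+l}$; this is the choice that serves all $k+l+1$ bounds simultaneously, because $T_{k+l}$ is a deterministic function of $(T_m,Y^n)$ for every $m\le k+l$, and of $T_m$ alone whenever $m\ge k+1$.

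\textbf{The three families of bounds.} With this $U$ I would run three single-letterizations patterned on (\ref{e_conv1_tri})--(\ref{e_conv3_tri}). For $1\le i\le k$, starting from $nR_i\ge H(T_i|Y^n)\ge I(X^n;T_i|Y^n)$ and inserting $\hat X_{i,j},\ldots,\hat X_{k+l-1,j}$ and $T_{k+l}$ (all functions of $(T_i,Y^n)$) into the conditioning gives $\sum_j I(X_j;\hat X_{i,j},\ldots,\hat X_{k+l-1,j},U_j\,|\,Y_j)$. For $k+1\le j\le k+l$, starting from $nR_j\ge I(X^n,Y^n;T_j)$ and inserting $\hat X_{j,i},\ldots,\hat X_{k+l-1,i}$ and $T_{k+l}$ (functions of $T_j$) gives $\sum_i I(X_i,Y_i;\hat X_{j,i},\ldots,U_i)\ge\sum_i I(X_i;\hat X_{j,i},\ldots,U_i)$, yielding the stated inequality. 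Finally $nR_{k+l+1}\ge I(X^n,Y^n;T_{k+l+1}|T_{k+l})$, together with $\hat X_{k+l,i}$ being a function of $(T_{k+l},T_{k+l+1})$, gives $\sum_i I(X_i,Y_i;\hat X_{k+l,i}|U_i)\ge\sum_i I(X_i;\hat X_{k+l,i}|U_i)$. Introducing the time-sharing variable $Q$ as in the cascade converse produces (\ref{e_R1_tri})--(\ref{e_def_dist_tri}), and the support lemma gives the cardinality bound: $|\mathcal X||\mathcal Y|\prod_{i=1}^{k+l}|\hat{\mathcal X}_i|-1$ letters preserve $P(x,y,\hat x_1,\ldots,\hat x_{k+l})$ (hence all distortions) while $k+l+1$ more preserve the rate functionals, for a total of $|\mathcal X||\mathcal Y|\prod_{i=1}^{k+l}|\hat{\mathcal X}_i|+k+l$.

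\textbf{Achievability.} I would first extend Lemma~\ref{l_distribustion_tri} to restrict the test channel to the Markov form $\hat X_{k+l}-(X,U)-(Y,\hat X_1,\ldots,\hat X_{k+l-1})$, i.e. $P(x,y)P(u,\hat x_1,\ldots,\hat x_{k+l-1}|x,y)P(\hat x_{k+l}|x,u)$, which leaves the region unchanged since the rate functionals depend only on the relevant marginals. The scheme then layers the two techniques already in hand. The block $(\hat X_1,\ldots,\hat X_{k+l-1},U)$ is generated jointly typical with $(X^n,Y^n)$ and relayed down the cascade by peeling: User $m$ decodes the remaining description $(\hat X_m,\ldots,\hat X_{k+l-1},U)$, outputs $\hat X_m$, and re-encodes $(\hat X_{m+1},\ldots,\hat X_{k+l-1},U)$ for the next hop. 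On the side-information hops ($m\le k$) the Wyner--Ziv binning of Theorem~\ref{t_cascade} makes the conditional rate $I(X;\hat X_m,\ldots,U|Y)$ sufficient, and at the boundary User $k$ forwards the full un-binned index, so rate $I(X,Y;\hat X_{k+1},\ldots,U)$ suffices for the no-side-information hops, which are otherwise plain forwarding of successively coarser descriptions. Finally User $k+l$ recovers $U^n$ and, using the direct-link codeword of rate $I(X;\hat X_{k+l}|U)$, forms $\hat X_{k+l}^n$; the Markov relation above lets the Markov lemma certify that $X^n,Y^n,\hat X_1^n,\ldots,\hat X_{k+l}^n$ are jointly typical, so all distortion constraints hold, exactly as for Theorem~\ref{t_triangle}.

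\textbf{Main obstacle.} The delicate points are, in the converse, recognizing that the auxiliary must be anchored to $T_{k+l}$ --- the boundary message $T_{k+1}$ that works for the triangular case fails for the downstream links $R_m$ with $m>k+1$, since $T_{k+1}$ is not a function of $T_m$ --- and, in the achievability, organizing the nested peeling codebook so that a single invocation of the Markov lemma simultaneously guarantees joint typicality of all $k+l$ reconstructions. I expect the latter to be the more technical part, though it is a routine, if lengthy, extension of the two-reconstruction argument of Theorem~\ref{t_triangle}.
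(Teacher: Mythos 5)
The paper itself gives no proof of Theorem~\ref{t_multi} --- it states only that ``the proof follows similar steps as the proof of Theorem~\ref{t_triangle} and is therefore omitted'' --- so there is no line-by-line comparison to make. Your reconstruction is exactly the extension the authors are gesturing at: the auxiliary $U_i=(T_{k+l},X^{i-1},Y^{i-1})$ anchored to the last cascade message (reducing to the paper's $U_i=(T_2,X^{i-1},Y^{i-1})$ when $k=l=1$), the three families of single-letterizations patterned on (\ref{e_conv1_tri})--(\ref{e_conv3_tri}), the support-lemma count $\bigl(|\mathcal X||\mathcal Y|\prod_i|\hat{\mathcal X}_i|-1\bigr)+(k+l+1)$, and a superposition/peeling codebook with Wyner--Ziv binning on the hops that see $Y$, plain forwarding afterwards, and one invocation of the Markov lemma for the direct-link reconstruction. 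All of that is sound.

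There is, however, one genuine gap: your converse and your achievability do not meet on the links $k+1\le j\le k+l$. Your converse correctly produces $\sum_i I(X_i,Y_i;\hat X_{j,i},\ldots,\hat X_{k+l-1,i},U_i)$ and you then \emph{discard} $Y$ to match the printed bound $I(X;\hat X_j,\ldots,U)$; but your achievability, as you yourself write, needs the full un-binned index on those hops, i.e.\ rate $I(X,Y;\hat X_j,\ldots,\hat X_{k+l-1},U)$. Since the test channel is not constrained to satisfy $Y-X-(\hat X_{k+1},\ldots,\hat X_{k+l-1},U)$, these two quantities differ in general, so the inner and outer regions you construct are not equal and the theorem as you argue it is not established. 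The resolution is that the statement itself is a typo: specializing it to $k=l=1$ gives $R_2\ge I(X;U)$, contradicting Theorem~\ref{t_triangle}'s $R_2\ge I(Y,X;U)$. The second family should read $R_j\ge I(X,Y;\hat X_j,\ldots,\hat X_{k+l-1},U)$; keep the stronger form of your converse rather than weakening it, and your two bounds coincide. (A smaller slip of the same kind: the optimization should range over distributions $P(x,y)P(\hat x_1,\ldots,\hat x_{k+l},u|x,y)$, not only $\hat x_1,\ldots,\hat x_k$; your factorized form $P(x,y)P(u,\hat x_1,\ldots,\hat x_{k+l-1}|x,y)P(\hat x_{k+l}|x,u)$ implicitly corrects this and is the right analogue of Lemma~\ref{l_distribustion_tri}.)
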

The proof of Theorem \ref{t_multi} follows similar steps as the
proof of Theorem \ref{t_triangle} and is therefore omitted.

\subsection{Empirical coordination}
In \cite{CuffPermuterCover09SubmittedIT}, two coordination problems
were introduced: Empirical coordination, where the goal is to
generate sequences with a specific empirical distribution, and
strong coordination, where the goal is to generate sequences with a
distribution that is close (in total variation) to a specific i.i.d.
distribution. The empirical coordination problem is a generalization
of the rate distortion problem, since a distortion constraint
defines a half-plane in the empirical distribution space. Hence, if
we find the optimal rate needed to generate a specific empirical
distribution, we also find the optimal rate needed to generate a
specific distortion constraint.

For the cascade rate distortion problem with side information at the first two nodes, the extension to the empirical coordination problem is straightforward.
\begin{theorem}[Rate coordination in the cascade problem] The rate coordination region $R_{P_0}(P(\hat x_1,\hat x_2|x,y))$ of the cascade problem where side information is known to the first two nodes, where $X,Y\sim P_0(x,y)$, and an empirical distribution $P_0(x,y)P(\hat x_1,\hat x_2|x,y)$ is desired, is given by
\begin{eqnarray}
R_2&\geq& I(Y,X;\hat X_2), \nonumber\\
R_1&\geq& I(X;\hat X_1,\hat X_2|Y),
\end{eqnarray}
where the joint distribution evaluating the mutual information expression is $P_0(x,y)P(\hat x_1,\hat x_2|x,y)$.
\end{theorem}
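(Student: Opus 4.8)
The plan is to piggyback entirely on the machinery already developed for Theorem \ref{t_cascade}, since the empirical-coordination version differs only in how the ``goal'' of the code is specified: instead of a union over all conditionals $P(\hat x_1,\hat x_2|x,y)$ meeting a distortion constraint, the target conditional is now a single fixed $P(\hat x_1,\hat x_2|x,y)$, and the mutual informations in the claimed region are evaluated under the prescribed joint law $P_0(x,y)P(\hat x_1,\hat x_2|x,y)$. Accordingly I would prove achievability and converse separately, each adapting one half of the proof of Theorem \ref{t_cascade}.

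For achievability, I would reuse verbatim the random-coding construction in the achievability part of Theorem \ref{t_cascade}: generate the $\hat X_2^n$ codebook i.i.d.\ $\sim P_{\hat X_2}$, bin it, and superimpose the $\hat X_1^n$ codebook drawn from $P_{\hat X_1|Y,\hat X_2}$, all induced by the prescribed $P_0(x,y)P(\hat x_1,\hat x_2|x,y)$. The encoders and decoders operate exactly as before, so with high probability the tuple $(X^n,Y^n,\hat X_1^n,\hat X_2^n)$ is jointly typical with respect to the target law. The only additional observation needed is that joint typicality forces the empirical distribution of $(X^n,Y^n,\hat X_1^n,\hat X_2^n)$ to lie within $\epsilon$ (in total variation) of $P_0(x,y)P(\hat x_1,\hat x_2|x,y)$; hence the coordination objective is met at rates $R_2=I(Y,X;\hat X_2)+\epsilon$ and $R_1=I(X;\hat X_1,\hat X_2|Y)+3\epsilon$, matching the claimed boundary as $\epsilon\to 0$.

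For the converse, the key point is that the entropy manipulations in (\ref{e_conv1}) and (\ref{e_conv2}) use only the structural properties of the code (that $T_2$ is a function of $(T_1,Y^n)$, and that $\hat X_{1,i},\hat X_{2,i}$ are deterministic functions of $(T_1,Y^n)$ and $T_2$ respectively), and therefore carry over unchanged. They yield $nR_2\geq\sum_i I(X_i,Y_i;\hat X_{2,i})$ and $nR_1\geq\sum_i I(X_i;\hat X_{1,i},\hat X_{2,i}|Y_i)$. Introducing the uniform time-sharing index $Q$ as in Theorem \ref{t_cascade}, the induced single-letter law of $(X_Q,Y_Q,\hat X_{1,Q},\hat X_{2,Q})$ is precisely the expected empirical distribution $\mathbb{E}[\hat P_{X^n,Y^n,\hat X_1^n,\hat X_2^n}]$, and the coordination requirement forces this to converge to the prescribed target $P_0(x,y)P(\hat x_1,\hat x_2|x,y)$. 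Passing the rate bounds through the continuity of mutual information on finite alphabets then gives $R_2\geq I(Y,X;\hat X_2)$ and $R_1\geq I(X;\hat X_1,\hat X_2|Y)$ evaluated under the target, with no residual union over distributions.

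The hard part will be the bookkeeping in the converse that links the high-probability closeness of the random empirical distribution to the closeness of the deterministic induced single-letter law: one must argue that ``$\hat P$ within $\delta$ of the target with probability tending to $1$'' implies, via boundedness of the alphabets and the identity between the induced law at time $Q$ and the expected empirical distribution, that $\mathbb{E}[\hat P]$ is within $O(\delta)$ of the target, and then invoke uniform continuity of $I(\cdot)$ together with the closedness of the region to push $\delta\to 0$. This is exactly the standard coordination-converse argument (cf.\ \cite{CuffPermuterCover09SubmittedIT}), so I expect no genuinely new difficulty beyond making these limits precise.
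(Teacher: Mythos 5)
Your proposal matches the paper's proof: achievability is inherited directly from the random-coding construction of Theorem \ref{t_cascade} (which already targets a fixed joint law via typicality), and the converse reuses the chains (\ref{e_conv1})--(\ref{e_conv2}) together with the identity $\mathbb{E}[\hat P_{X^n,Y^n,\hat X_1^n,\hat X_2^n}]=P_{X_Q,Y_Q,\hat X_{1,Q},\hat X_{2,Q}}$ from \cite[Proposition 2]{CuffPermuterCover09SubmittedIT}, exactly as the paper does. Your additional remarks on continuity of mutual information just make explicit the limiting step the paper leaves implicit.
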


\begin{proof}
The achievability proof follows immediately from the achievability
proof of Theorem \ref{t_cascade}, where we fixed  an empirical
distribution and showed that it can be achieved using the above
rates. The converse also follows from the converse of Theorem
\ref{t_cascade}, but in the last step we need to invoke
\cite[Proposition 2]{CuffPermuterCover09SubmittedIT}, which states
that the expected empirical distribution equals the distribution of
the random variables chosen uniformly over the time sequence
$1,2,...,n$, i.e., ${\mathbb E} \left[P_{X^n,Y^n,\hat X_1^n,\hat
X_2^n}(x,y,\hat x_1,\hat x_2)\right]{=}P_{X_Q,Y_Q,\hat X_{1,Q},\hat
X_{2,Q}}(x,y,\hat x_1,\hat x_2).$
\end{proof}

 However, the triangular coordination problem is an open problem, even without side information.
The solution here is heavily based on the fact that in the
achievability proof it suffices to consider only a specific
empirical distribution (with a Markov structure), but for an
arbitrary distribution the coordination problem remains open.

\appendices
\section{Proof of Theorem \ref{t_cascade_gaussian}\label{s_app_gauss_cascade}}
Following Lemma \ref{l_gauusian} we can rewrite the rate region for
the Gaussian case as:
\begin{eqnarray}
R_2&\geq& I(Y,X;W), \label{e_R2W}  \\
R_1&\geq& I(X;V,W|Y),\label{e_R1W}
\end{eqnarray}
where the vector $(X,Y,V,W)$ is jointly Gaussian distributed and
satisfies
\begin{eqnarray}
\sigma^2_{X|W}&\leq& D_2\label{e_d2W}\\
\sigma^2_{X|W,V,Y}&\leq& D_1\label{e_d1W},
\end{eqnarray}
where $\sigma^2_{A|B}\triangleq E[(A-E[A|B])^2].$

Without loss of generality let us choose the following structure
\begin{eqnarray}\label{e_ywv}
Y&=&X+Z,\nonumber \\
W&=&X+\alpha Y+Z_2=(1+\alpha)X+\alpha Z+Z_2,\nonumber \\
V&=&X+\beta Y+\gamma Z_2+Z_1,
\end{eqnarray}
where the random variables $X,Z,Z_1,Z_2$ are jointly Gaussian and
mutually independent, with variances $\sigma_X^2,\sigma_Z^2,
\sigma_{Z_1}^2, \sigma_{Z_2}^2$, respectively,  and the coefficients
$(\alpha,\beta,\gamma)$ are real number scalars.

Equations (\ref{e_R1W})-(\ref{e_d1W}) become
\begin{eqnarray}\label{e_r2_sigma}
R_2&\geq& I(X,Y;W)\nonumber \\
&=&H(W)-H(W|X,Y)\nonumber \\
&=&\frac{1}{2}\log
\frac{(1+\alpha)^2\sigma_X^2+\alpha^2\sigma_Z^2+\sigma_{Z_2}^2}{\sigma_{Z_2}^2}
\end{eqnarray}
\begin{equation}\label{e_d2_sigma}
D_2\geq
\sigma_{X|W}^2=\frac{\sigma_X^2(\alpha^2\sigma_Z^2+\sigma_{Z_2}^2)}{(1+\alpha)^2\sigma_X^2+\alpha^2\sigma_Z^2+\sigma_{Z_2}^2}
\end{equation}
\begin{equation}\label{e_R1_sigma}
R_1=\frac{1}{2}\max\left(\log
\frac{\sigma_{X|Y}^2}{\sigma_{X|W,Y}^2},\log
\frac{\sigma_{X|Y}^2}{D_1} \right),
\end{equation}
where
$\sigma_{X|Y}^2=\frac{\sigma_X^2\sigma_Z^2}{\sigma_X^2+\sigma_Z^2}$
and
$\sigma_{X|W,Y}^{-2}=\sigma_{Z_2}^{-2}+\sigma_{X}^{-2}+\sigma_{Z}^{-2}$.

Inequalities (\ref{e_r2_sigma}) and (\ref{e_d2_sigma}) follow
directly from (\ref{e_R2W}) and (\ref{e_d2W}), respectively. Eq.
(\ref{e_R1_sigma}) follows from combining the following two equations,
(\ref{e_R1_D1ge})- (\ref{e_R1_D1le}). If $D_1\geq \sigma^2_{X|W,Y}$,
then (\ref{e_d1W}) is automatically satisfied, and then $V$ is not
needed (may be independent of anything else) and therefore
\begin{eqnarray}\label{e_R1_D1ge}
R_1&\geq& I(X;W|Y) \nonumber \\
&=& H(X|Y)- H(X|Y,W)\nonumber \\
&=& H(X|Y)- H(X|Y,W)\nonumber \\
&=&\frac{1}{2}\log \frac{\sigma_{X|Y}^2}{\sigma_{X|W,Y}^2}.
\end{eqnarray}
If $D_1\leq \sigma^2_{X|W,Y}$, then
\begin{eqnarray}\label{e_R1_D1le}
R_1&\geq& I(X;V,W|Y)\nonumber\\
&=&H(X|Y)-H(X|Y,V,W)\nonumber \\
&=& \frac{1}{2}\log \frac{\sigma_{X|Y}^2}{D_1}.
\end{eqnarray}
The last equality is due to the fact that we can choose $(\beta, \gamma,
Z_1)$ such that $\sigma^2_{X|W,V,Y}= D_1$. 

Now let us fix $D_1\geq 0$,  $D_2\geq0$, and
$R_2\geq\frac{1}{2}\log\frac{\sigma_X^2}{D_2}$, and let us find the
function $R_1(D_1,D_2,R_2)$, which defines the rate region. (The
condition on $R_2$ is due to the fact that if
$R_2<\frac{1}{2}\log\frac{\sigma_X^2}{D_2}$ the rate will not be
achievable for any $R_1$). To find $R_1$ we need to solve
the following optimization problem

\begin{align}
 \text{maximize\ \ } & \sigma_{Z_2}^2\label{e_obj_opt} \\
 \text{subject to\ \ } &
(2^{2R_2}-1)\sigma_{Z_2}^2\geq
(1+\alpha)^2\sigma_X^2+\alpha^2\sigma_Z^2 \label{e_st1}\\
&\sigma_{Z_2}^2(\sigma_X^2-D_2)\leq\alpha^2(\sigma_X^2D_2+\sigma_Z^2D_2-\sigma_X^2\sigma_Z^2)+2\alpha\sigma_X^2D_2+D_2\sigma_X^2\label{e_st2}
\end{align}

The objective (\ref{e_obj_opt}) follows from the fact that $R_1$ depends only on
$\sigma_{Z_2}^2$ and (\ref{e_st1}) and (\ref{e_st2}) follow from
(\ref{e_r2_sigma}) and (\ref{e_d2_sigma}), respectively. To solve
this optimization problem, we divide the problem into four cases,
where each case has a simple solution (each case corresponds to a
line in (\ref{e_gaussian})).

Case 1: For this case we assume that
\begin{equation}\label{e_case1}
\sigma_X^2D_2+\sigma_Z^2D_2-\sigma_X^2\sigma_Z^2<0 \Rightarrow
D_2\leq
\frac{\sigma_Z^2\sigma_X^2}{\sigma_Z^2+\sigma_X^2}=\sigma^2_{X|Y},
\end{equation}
and
\begin{eqnarray}
R_2&\geq&\frac{1}{2}\log
\frac{\sigma_Z^2(\sigma_X^2-D_2)}{\sigma_Z^2\sigma_X^2-D_2\sigma_Z^2-D_2\sigma_X^2}\frac{\sigma_X^2}{D_2}.
\end{eqnarray}

%

Because of the assumption in (\ref{e_case1}), Eq. (\ref{e_st2})
holds with equality, since otherwise $\sigma_{Z_2}^2$ can be
increased until it hits the boundary of (\ref{e_st2}).

 \begin{figure}[h!]{
\psfrag{a}[][][1]{$\alpha$}\psfrag{sz}[][][1]{$\sigma_{Z_2}^2$}
\psfrag{c1}[][][1]{{\color{red} Constraint Eq. (\ref{e_st1})}\ \ \ }
\psfrag{c2}[][][1]{\color{blue} Constraint Eq. (\ref{e_st2})\ \ \ }
%
\centerline{\includegraphics[width=7cm]{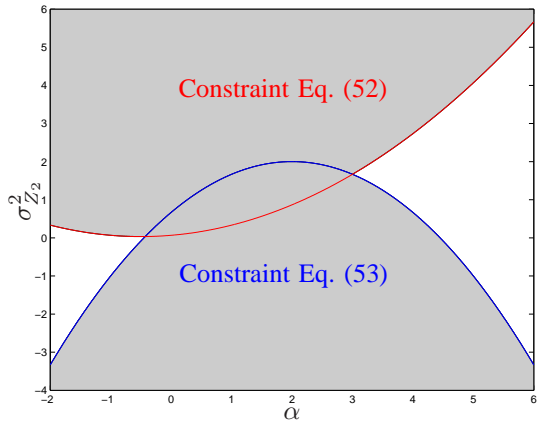}}
\caption{Case 1: the maximum of $\sigma_{Z_2}^2$, where both
constraints hold, is obtained at the maximum of Eq. (\ref{e_st2}).}
\label{f_cascad} }\end{figure}

The argument that achieves the maximum of a quadratic form
$a\alpha^2+g\alpha+c$ is $\frac{-b}{2a}$, hence the argument that
maximizes (\ref{e_st2}) is
\begin{equation} \label{e_alpha_max}
\overline
\alpha=\frac{-\sigma_X^2D_2}{\sigma_X^2D_2+\sigma_Z^2D_2-\sigma_X^2\sigma_Z^2},
\end{equation}
and the maximum is
\begin{eqnarray}
\overline \sigma^2_{Z_2}&=&c-\frac{b^2}{4a}\nonumber \\
&=&\frac{\sigma_x^2D_2}{\sigma_Z^2D_2-\sigma_Z^2\sigma_X^2}{(\sigma_X^2-D_2)(\sigma_X^2D_2+\sigma_Z^2D_2-\sigma_Z^2\sigma_X^2)}\nonumber \\
&\stackrel{}{=}&\overline \alpha\sigma_Z^2. \label{e_alphaz}
\end{eqnarray}
Note that (\ref{e_alphaz}) can be also written as
\begin{equation}\label{e_smax_simple}
\frac{1}{\overline
\sigma_{Z_2}^{2}}=\frac{1}{D_2}-\frac{1}{\sigma_Z^{2}}-\frac{1}{\sigma_X^{2}}.
\end{equation}
If $(\overline \alpha,\overline \sigma_{Z_2}^2)$ satisfy Eq.
(\ref{e_st1}), then the solution to the optimization problem is
simply $\overline \sigma_{Z_2}^2$ and using (\ref{e_R1_sigma}) we
obtain
\begin{equation}
R_1=\frac{1}{2}\max\left(\log \frac{\sigma_{X|Y}^2}{D_2},\log
\frac{\sigma_{X|Y}^2}{D_1} \right).
\end{equation}
Now let us investigate when $(\overline \alpha,\overline \sigma_{Z_2}^2)$
satisfies Eq. (\ref{e_st1}) (or equivalently (\ref{e_r2_sigma}))
\begin{eqnarray}\label{e_r2_con_case1}
R_2&\geq&\frac{1}{2}\log \frac{(1+\overline
\alpha)^2\sigma_X^2+\overline \alpha^2
\sigma_Z^2+\overline\sigma_{Z_2}^2}{\overline
\sigma_{Z_2}^2}\nonumber
\\
&\stackrel{(a)}{=}&\frac{1}{2}\log
\frac{\sigma_X^2(\overline \alpha^2\sigma_Z^2+\overline \sigma_{Z_2}^2)}{\overline \sigma_{Z_2}^2D_2}\nonumber\\
&\stackrel{(b)}{=}&\frac{1}{2}\log
\frac{\sigma_X^2(\overline \alpha^2\sigma_Z^2+\overline \alpha \sigma_{Z}^2)}{\overline \alpha \sigma_{Z}^2D_2}\nonumber\\
&\stackrel{(c)}{=}&\frac{1}{2}\log
\frac{\sigma_Z^2(\sigma_X^2-D_2)}{\sigma_Z^2\sigma_X^2-D_2\sigma_Z^2-D_2\sigma_X^2}\frac{\sigma_X^2}{D_2},\label{e_R2_cond}
\end{eqnarray}
where (a) follows from Equality (\ref{e_d2_sigma}), (b) from
(\ref{e_alphaz}) and (c) from (\ref{e_alpha_max}).

Case 2: Assume that
\begin{equation}\label{e_case1}
D_2\leq
\frac{\sigma_Z^2\sigma_X^2}{\sigma_Z^2+\sigma_X^2}=\sigma^2_{X|Y},
\end{equation}
and
\begin{eqnarray}
R_2&\leq&\frac{1}{2}\log
\frac{\sigma_Z^2(\sigma_X^2-D_2)}{\sigma_Z^2\sigma_X^2-D_2\sigma_Z^2-D_2\sigma_X^2}\frac{\sigma_X^2}{D_2}.
\end{eqnarray}
Now if (\ref{e_R2_cond}) is not satisfied, then the maximum of
$\sigma_{Z_2}^2$ should be on the boundary of the constraints,
namely, both (\ref{e_st1}) and (\ref{e_st2}) should hold with
equality. This is because the upper part of the intersection should
be either increasing or decreasing. Such a case is shown in Fig.
\ref{f_cascade2}.
\begin{figure}[h!]{
\psfrag{a}[][][1]{$\alpha$}\psfrag{sz}[][][1]{$\sigma_{Z_2}^2$}
\psfrag{c1}[][][1]{\color{red} Constraint Eq. (\ref{e_st1})\ \ \  \
\ \ \ \ \ \ \ \ } \psfrag{c2}[][][1]{\color{blue}\ \ \ \ \ \ \ \ \
Constraint Eq. (\ref{e_st2})}

\centerline{\includegraphics[width=7cm]{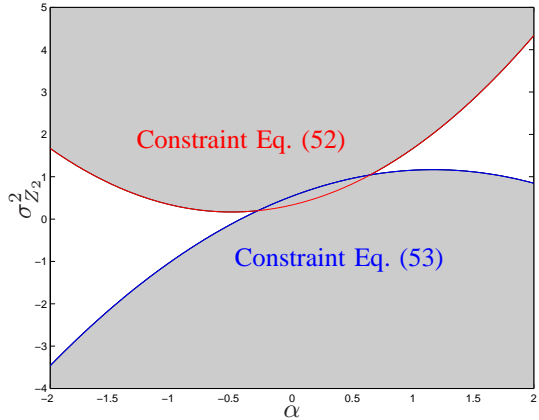}}
\caption{Case 2: the maximum of $\sigma_{Z_2}^2$, where both
constraints hold, is obtained at the intersection of (\ref{e_st1})
and (\ref{e_st2}).} \label{f_cascade2} }\end{figure}

Consider the case where (\ref{e_r2_sigma}) and (\ref{e_d2_sigma})
hold with equality. Then we obtain
\begin{equation}
2^{2R_2}\sigma_{Z_2}^2=\frac{\sigma_X^2(\alpha^2\sigma_Z^2+\sigma_{Z_2}^2)}{D_2},
\end{equation}
which implies
\begin{equation}\label{e_sigmaz2_case1b}
\sigma_{Z_2}^2=\frac{\sigma_Z^2\sigma_X^2}{2^{2R_2}D_2-\sigma_X^2}\alpha^2.
\end{equation}
Now substituting $\sigma_{Z_2}^2$ given by (\ref{e_sigmaz2_case1b})
into (\ref{e_st1}) we obtain
\begin{equation}
\frac{\alpha^2\sigma_Z^2\sigma_X^2(2^{2R_2}-1)}{2^{2R_2}D_2-\sigma_X^2}=(1+\alpha)^2\sigma_X^2+\alpha^2\sigma_Z^2,
\end{equation}
which simplifies to
\begin{equation}
\frac{\alpha^2\sigma_Z^2(\sigma_X^2-D_2)}{D_2-\sigma_X^22^{-2R_2}}=(1+\alpha)^2\sigma_X^2.
\end{equation}
Taking the square-root on each side of the equation we obtain two
possible solutions for $\alpha$:
\begin{equation}
\frac{1}{\alpha}=\pm \frac{\sigma_Z}{\sigma_X}
\sqrt{\frac{\sigma_X^2-D_2}{D_2-\sigma_X^2 2^{-2R_2}}}-1.
\end{equation}
Since we need to maximize $\sigma_{Z_2}^2$, which is proportional to
$\alpha^2$ (see Eq. (\ref{e_sigmaz2_case1b})), we choose the
solution with the plus sign.

Case 3: Assume that
\begin{equation}\label{e_case1}
D_2\geq
\frac{\sigma_Z^2\sigma_X^2}{\sigma_Z^2+\sigma_X^2}=\sigma^2_{X|Y},
\end{equation}
and
\begin{eqnarray}
R_2&\geq&\frac{1}{2}\log
\frac{\sigma_Z^2(\sigma_X^2-D_2)}{\sigma_Z^2\sigma_X^2-D_2\sigma_Z^2-D_2\sigma_X^2}\frac{\sigma_X^2}{D_2}.
\end{eqnarray}

 \begin{figure}[h!]{
\psfrag{a}[][][1]{$\alpha$}\psfrag{sz}[][][1]{$\sigma_{Z_2}^2$}
\psfrag{c1}[][][1]{{\color{red} \ \ \ \ \ \ \ \ \ \ \ Constraint Eq.
(\ref{e_st1})}} \psfrag{c2}[][][1]{\color{blue} Constraint Eq.
(\ref{e_st2})\ \ \ }
\centerline{\includegraphics[width=7cm]{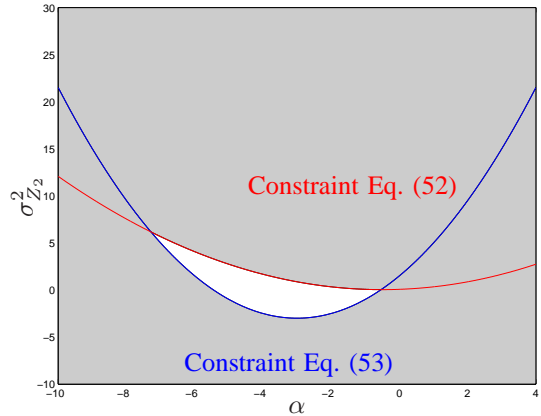}}
\caption{Case 3: the maximum of $\sigma_{Z_2}^2$, where both
constraints hold, is obtained at infinity, since there is a infinite
overlap between the constraints. \label{f_cascad3}}}\end{figure}

If
\begin{equation}
\frac{(\sigma_X^2D_2+\sigma_Z^2D_2-\sigma_X^2\sigma_Z^2)}{\sigma_X^2-D_2}\geq
\frac{\sigma_X^2+\sigma_Z^2}{2^{2R_2}-1},
\end{equation}
which is equivalent to
\begin{equation}\label{e_r2_cond_case2}
2^{2R_2}\geq
\frac{\sigma_X^4}{\sigma_X^2D_2+\sigma_Z^2D_2-\sigma_X^2\sigma_Z^2},
\end{equation}
 then the maximum of $\sigma_{Z_2}^2$ is
obtained at infinity (as illustrated in Fig. \ref{f_cascad3}), which
implies that
\begin{equation}
R_1=\frac{1}{2}\max\left(0,\log \frac{\sigma_{X|Y}^2}{D_1}
\right)=\frac{1}{2}\log \frac{\sigma_{X|Y}^2}{D_1}.
\end{equation}

 \begin{figure}[h!]{
\psfrag{a}[][][1]{$\alpha$}\psfrag{sz}[][][1]{$\sigma_{Z_2}^2$}
\psfrag{c1}[][][1]{{\color{red} Constraint Eq. (\ref{e_st1})} \ \ \
\ \ \ \ \ \ } \psfrag{c2}[][][1]{\color{blue} Constraint Eq.
(\ref{e_st2})\ \ \ \ \ \ \ \ \ \ \ \ \ \ \ \ \ \ \ \  }
\centerline{\includegraphics[width=7cm]{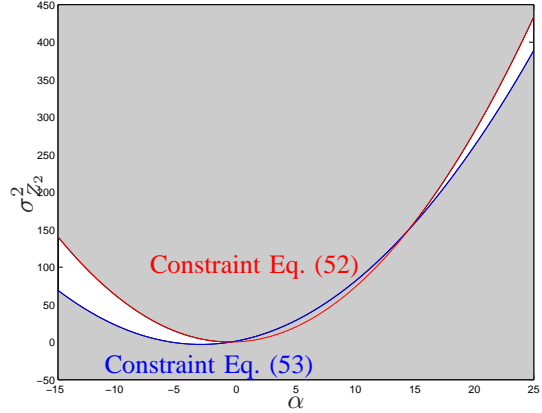}}
\caption{Case 4: the maximum of $\sigma_{Z_2}^2$, where both
constraints hold, is obtained at the intersection of (\ref{e_st1})
and (\ref{e_st2}).}\label{f_cascade4} }\end{figure}

Case 4: Assume that
\begin{equation}\label{e_case1}
D_2\geq
\frac{\sigma_Z^2\sigma_X^2}{\sigma_Z^2+\sigma_X^2}=\sigma^2_{X|Y},
\end{equation}
and
\begin{eqnarray}
R_2&\leq&\frac{1}{2}\log
\frac{\sigma_Z^2(\sigma_X^2-D_2)}{\sigma_Z^2\sigma_X^2-D_2\sigma_Z^2-D_2\sigma_X^2}\frac{\sigma_X^2}{D_2}.
\end{eqnarray}

If (\ref{e_r2_cond_case2}) does not hold, then the maximum of
$\sigma_{Z_2}^2$ should be at boundary of the constraint, namely,
(\ref{e_st1}) and (\ref{e_st2}) should hold with equality. This is
because the upper part of the intersection should be either
increasing or decreasing. Such a case is shown in Fig.
\ref{f_cascade4}. \hfill \QED

\section{Proof of Theorem \ref{t_gaussian_tri}\label{s_app_tri_gauss}}
Let us rewrite the rate region equations similarly to
(\ref{e_R1W})-(\ref{e_d1W}) as,
\begin{eqnarray}
R_1&\geq& I(X;V,W|Y),\label{e_R1W_tri}\\
R_2&\geq& I(Y,X;W), \label{e_R2W_tri}  \\
R_3&\geq& I(X;W'|W), \label{e_R3W_tri}
\end{eqnarray}
where the vector $(X,Y,V,W)$ is jointly Gaussian distributed and
satisfies
\begin{eqnarray}
\sigma^2_{X|W,W'}&\leq& D_2\label{e_d2W_tri}\\
\sigma^2_{X|W,V,Y}&\leq& D_1\label{e_d1W_tri},
\end{eqnarray}
Without loss of generality, we may assume that $X,Y,W,V$ have the
same structure as in (\ref{e_ywv}) and $W'=X+\eta W+Z'$ where
$Z'\sim N(0,\sigma_{Z'}^2$ is independent of  $X,Y,W,V$.
Furthermore, we note that we can assume that (\ref{e_R3W_tri}) holds
with equality, since if not, we can change  $\eta$ and $Z'$ such
that equality will hold, and the change will only decrease
$\sigma^2_{X|W,W'}$ -  therefore (\ref{e_R1W_tri})-(\ref{e_d1W_tri})
will continue to hold. Now, the equality in (\ref{e_R3W_tri})
implies that
\begin{equation}
 \sigma_{X|W,W'}^2=\sigma_{X|W}^2 2^{-2R_3}.
\end{equation}
Hence (\ref{e_d2W_tri}) becomes
\begin{equation}
\sigma_{X|W}^2 \leq D_2 2^{2R_3}.
\end{equation}
 Now we note that we obtain the same optimization problem as in (\ref{e_r2_sigma})-(\ref{e_R1_sigma}), just that $D_2$ is replaced by $D_2
 2^{2R_3}$.\hfill \QED

\newpage

\bibliographystyle{unsrt}
\bibliographystyle{IEEEtran}

\end{document}